\newcounter{aux}
\newcommand{\Rounding}{StarRounding}
\newcommand{\opt}{\ensuremath{\mathrm{OPT}}}
\newcommand{\sol}{\ensuremath{\mathrm{SOL}}}
\newcommand{\SF}{\ensuremath{\mathrm{(SF)}}}
\newcommand{\Derandomized}{DeterministicStarRounding}
\newcommand{\Oh}{\mathit{O}}
\newcommand{\plspA}{\hspace{1.5ex}}
\newcommand{\plspB}{\hspace{2ex}}
\newcommand{\plspC}{\hspace{4ex}}
\newcounter{constraint}
\newenvironment{linearprogram}
                {\noindent\begin{tabular}{l@{\plspC}l@{\plspB}%
                              >{\begin{math}\displaystyle}r<{\end{math}}@{\plspA}%
                              >{\begin{math}\displaystyle}c<{\end{math}}@{\plspA}%
                              >{\begin{math}\displaystyle}l<{\end{math}}@{\plspB}%
                              >{\begin{math}\displaystyle}l<{\end{math}}@{\plspB}%
                              l@{}}}
                {\end{tabular}\noindent}%
\newcommand{\profit}{\ensuremath{\mathrm{profit}}}
\newcommand{\E}{\ensuremath{\mathbb{E}}}
\begin{document}
\mainmatter
    \title{Approximation Algorithms for\\
    the Max-Buying Problem with Limited Supply}
    \author{Cristina G. Fernandes%
    \thanks{Partially supported by CNPq 309657/2009-1 and 475064/2010-0.}%
    \and Rafael C. S. Schouery%
    \thanks{Supported by grant \mbox{2009/00387-7}, São Paulo Research Foundation (FAPESP).}}%
    \institute{Department of Computer Science, University of São Paulo, Brazil\\
    \email{\{cris,schouery\}@ime.usp.br}
    }
 
\maketitle

\begin{abstract}
  We consider the Max-Buying Problem with Limited Supply, in which there are~$n$ items, with $C_i$ copies of each item $i$, and $m$ bidders such that every bidder $b$ has valuation $v_{ib}$ for item~$i$. The goal is to find a pricing~$p$ and an allocation of items to bidders that maximizes the profit, where every item is allocated to at most $C_i$ bidders, every bidder receives at most one item and if a bidder $b$ receives item $i$ then \mbox{$p_i \leq v_{ib}$}. Briest and Krysta presented a $2$-approximation for this problem and Aggarwal et al. presented a $4$-approximation for the Price Ladder variant where the pricing must be non-increasing (that is, $p_1 \geq p_2 \geq \cdots \geq p_n$). We present an $e/(e-1)$-approximation for the Max-Buying Problem with Limited Supply and, for every $\varepsilon > 0$, a $(2+\varepsilon)$-approximation for the Price Ladder variant.
\end{abstract}

\keywords{pricing problem, unit-demand auctions, approximation algorithms}

\section{Introduction}
One interesting economic problem faced by companies that sell products or provide services to consumers is to choose the price of products or services in order to maximize profit. If prices are high then some consumers will not want to (or will not be able to) buy the product  and if the prices are low, the company could obtain a low profit. This is a vastly studied problem, with different models for different situations and a great diversity of approaches~\cite{OrenSW84,OrenSW87,Sen82,Smith86}. 

One way to address this problem is through the nonparametric approach~\cite{RusmevichientongRG06}, where the company collects the preferences of consumers groups (for example, using a website) and optimizes according to some assumptions on the consumer behavior.

In this scenario, we have $n$ products or services (that we will call items) and there are $m$ consumers (that we will call bidders) in the market. At first, we consider that there is an unlimited supply of every item. The auctioneer (the price setter) wants to assign a price $p_i$ for every item $i$ with the objective of maximizing her/his profit (the sum of the prices of sold items considering multiplicities). For this, the auctioneer gathers information about the valuations of the bidders, that is, the largest amount that a bidder $b$ is willing to pay for item $i$ (denoted by $v_{ib}$). 

First of all, we assume that our market is unit-demand, that is, every bidder desires to buy at most one item. We also assume that a bidder $b$ will only buy an item if it is not too expensive (that is, if $p_i \leq v_{ib}$). We call those items feasible for $b$. If there is no feasible item for $b$ then $b$ does not buy any item.

Three models were introduced by Rusmevichientong et al.~\cite{RusmevichientongRG06}. In the \mbox{\textbf{Min-Buying Problem}}, we consider that every bidder will buy one of the less expensive items that are feasible. In the \textbf{Rank-Buying Problem}, we also know the preference order among the items for every bidder and we consider that a bidder will buy the most preferred feasible item. Finally, in the \mbox{\textbf{Max-Buying Problem}}, we consider that every bidder will buy one of the most expensive items that are feasible. 

There are also some restrictions that one can impose on the problems mentioned above. One of such restrictions is called \textbf{limited supply}, that is, every item has some maximum number of copies that can be sold. Sometimes a company knows (or desires) an ordering in the prices of its products and we can impose a \textbf{price ladder}, that is, consider only pricings that have $p_1 \geq p_2 \geq \cdots \geq p_n$. Finally, one can also focus on \textbf{uniform budgets}, where every bidder $b$ has a set of items $I_b$ and a value $V_b$ such that, for every item $i$, $v_{ib} = V_b$ if $i \in I_b$ and $v_{ib} = 0$ otherwise.

Rusmevichientong et al.~\cite{RusmevichientongRG06} showed that, if we impose a price ladder, then one can solve the Min-Buying Problem with Uniform Budgets in polynomial time. Later on, Aggarwal et al.~\cite{AggarwalFMZ04} proved several results for these models considering non-uniform budgets.
They presented a polynomial-time approximation scheme for the Max-Buying Problem with a Price Ladder and showed how to reduce the Rank-Buying Problem with a Price Ladder to the Max-Buying Problem with a Price Ladder. They also presented a $4$-approximation algorithm for the Max-Buying Problem with a Price Ladder and Limited Supply. For the case where we do not have a price ladder, Aggarwal et al. presented an \mbox{$e/(e-1)$-approximation} algorithm for the Max-Buying Problem along with a lower bound of $16/15$, a~$\log(m)$-approximation that can be used for the three models and a~$1+\varepsilon$ lower bound for the Min-Buying Problem, for some constant~$\varepsilon$.

Another variant considered by Aggarwal et al.~\cite{AggarwalFMZ04} is the online version of the Max-Buying Problem with Limited Supply where we know the valuation matrix $v$ in advance but we do not know the arrival order of the bidders and we have to choose a pricing. When a bidder arrives, he buys the most expensive feasible item that still has an unsold copy. They proved that, for any fixed pricing, the revenue obtained by any ordering of the bidders is at least $1/2$ of the revenue obtained by an optimal ordering of the bidders. From this, it follows that any~\mbox{$\alpha$-approximation} for the Max-Buying Problem with Limited Supply (with or without a price ladder) is also $2\alpha$-competitive for the online version.

Briest and Krysta~\cite{BriestK07} showed that the Min-Buying Problem is not approximable within $\Oh(\log^{\varepsilon} m$) for some positive constant $\varepsilon$, unless \mbox{$\mathrm{NP} \subseteq \mathrm{DTIME}(n^{\Oh(\log\log n)})$}, within $\Oh(\ell^\varepsilon)$ where $\ell$ is an upper bound on the number of non-zero valuations per bidder, and within $\Oh(n^\varepsilon)$ under slightly stronger assumptions. They also showed that the Max-Buying Problem with a Price Ladder is strong NP-hard and they presented a $2$-approximation algorithm for the Max-Buying Problem with Limited Supply (without a price ladder).

Guruswami et al.~\cite{GuruswamiHKKKM05} studied the Envy-Free Pricing Problem, where a bidder~$b$ must receive an item in the set \mbox{$D_b = \{i \in I: p_i < v_{ib}\}$} that maximizes~$v_{ib} - p_i$. If such set is empty, then $b$ must either receive no item or receive an item such that $p_i = v_{ib}$. The problem was considered in a more general setting where bidders have valuations for bundles of items (like in a combinatorial auction~\cite[Chap.~11]{NisanRTV07}), but their work focus on unit-demand auctions and also on single-minded bidders. This problem is related with the one of finding a Walrasian Equilibrium~\cite{Walras54} and it is, actually, a relaxation of such equilibrium where we allow unsold items to have non-zero pricing. If we exchange the objective to maximizing the social welfare instead of maximizing the auctioneer's profit, then it is possible to use the well-known VCG mechanism~\cite{Vickrey61,Clarke71,Grove73} to solve the problem in polynomial time, with the nice property that the mechanism is also truthful.

Guruswami et al. proved that the Envy-Free Pricing Problem for the unit-demand case is APX-hard and provided an $\Oh(\log n)$-approximation for it. Latter on, Briest~\cite{Briest08} showed that this problem cannot be approximated within~$\Oh(\log^\varepsilon |B|)$ for some $\varepsilon > 0$ if we assume some specific hardness of refuting random 3SAT-instances or approximating the balanced bipartite independent set problem in constant degree graphs. 

\subsection{Our Results}
We present two new approximation algorithms for the Max-Buying Problem with Limited Supply, one for the general case and another for the case where a Price Ladder is given. Both algorithms improve the previously best know approximation ratio for these problems.

For the Max-Buying Problem with Limited Supply (without a price ladder), we present an $e/(e-1)$-approximation improving the previous upper bound of~$2$ by Briest and Krysta~\cite{BriestK07}. (Recall that $e/(e-1) < 1.582$.) Also, this algorithm has the same approximation ratio as the algorithm for the Max-Buying Problem (with unlimited supply) presented by Aggarwal et al.~\cite{AggarwalFMZ04}. Note that unlimited supply is a particular case of our problem where the number of copies of every item is the number of bidders. We believe that the algorithm is interesting by itself: it uses an integer programming formulation with an exponential number of variables to do a probabilistic rounding and also it explores some structure of the problem that could be useful when developing approximations for the other problems previously described. We also show how to find a deterministic algorithm though derandomization of our algorithm using the method of conditional expectations~\cite{ErdosS73,Spencer87}.

For the Max-Buying Problem with Limited Supply and a Price Ladder, we present a family of algorithms parametrized by a positive rational $\varepsilon$  such that the algorithm is polynomial for constant $\varepsilon$ and provides an approximation ratio of~$2+\varepsilon$.

Notice that, using the result presented by Aggarwal et al.~\cite{AggarwalFMZ04}, our first algorithm is $2e/(e-1)$-competitive for the online version of the Max-Buying Problem with Limited Supply and our second algorithm is $(4+\varepsilon)$-competitive for the online version of the Max-Buying Problem with Limited Supply and a Price Ladder.

This paper is organized as follows. In the next section, we present some notation and describe formally the problem that we address. In Sect.~\ref{sec:lc} we present an~$e/(e-1)$-approximation for the Max-Buying Problem with Limited Supply and in Sect.~\ref{sec:lc-pl} we present, for a given positive rational $\varepsilon$, a $(2+\varepsilon)$-approximation for the Max-Buying Problem with Limited Supply and a Price Ladder. Finally, in Sect.~\ref{sec:conclusion} we present our final remarks. In Appendix~\ref{sec:ommited} we present the omitted proofs and in Appendix~\ref{sec:derandomizing} we show how to derandomize the algorithm presented in Sect.~\ref{sec:lc}.

\section{Model and Notation}\label{sec:models-and-notation}
We denote by $B$ the set of bidders and by $I$ the set of items.  

\begin{definition}
  A \emph{valuation} matrix is a non-negative integer matrix $v$ indexed by~$I \times B$. 
\end{definition}

The number $v_{ib}$ represents the value of item $i$ to bidder $b$.

\begin{definition}
  A \emph{pricing} is a non-negative rational vector indexed by $I$.
\end{definition}

\begin{definition}
  An \emph{allocation} is a vector $x$ indexed by $B$ where $x_b$ is the item allocated to bidder $b$. If a bidder $b$ does not receive an item, then $x_b = \emptyset$.
\end{definition}

Note that an allocation is not necessarily a matching, as the same item can be assigned to more than one bidder (each one receives a 
copy of the item).

\begin{definition}
  Given a valuation $v$ and a pricing $p$, we say that item $i$ is \emph{feasible} for bidder $b$ if $v_{ib} \geq p_i$.
\end{definition}

\begin{definition}
    The \emph{Max-Buying-Limited Problem} consists of, given a valuation~$v$ and a vector $C$ indexed by $I$, finding a pricing $p$ and an allocation $x$ that maximizes the auctioneer's profit such that every item is allocated to at most~$C_i$ bidders and every bidder either receives no item or receives a feasible item.
\end{definition}

Notice that, in spite of the name of the problem, we do not demand that a bidder $b$ receives a feasible item $i$ that maximizes $p_i$. That is, we allow bidder $b$ to receive another item (or none at all) if the most expensive item that is feasible to $b$ is sold out (that is, all copies are allocated to other bidders). 

Next we formalize the variant of the problem where we have a Price Ladder.
\begin{definition}
  The \emph{Max-Buying-PL-Limited Problem} is a variant of the \mbox{Max-Buying-Limited Problem} where the prices are restricted to be non-increasing, that is, $p_1 \geq \cdots \geq p_n$, where $n = |I|$.
\end{definition}

\section{An Algorithm for Limited Supply}\label{sec:lc}

Next, we present a new approximation algorithm for the Max-Buying-Limited Problem with a better ratio than the approximation presented by Briest and Krysta~\cite{BriestK07}. This approximation can also be used for the Max-Buying Problem (without limited supply) and has the same ratio than the approximation (for the Max-Buying Problem) presented by Aggarwal et al.~\cite{AggarwalFMZ04}.

First, consider a solution $(x,p)$ of the Max-Buying-Limited Problem, an item~$i$ and the set $S$ of bidders that bought item $i$. Note that, because $i$ is feasible for every bidder in $S$, we have that~$p_i \leq \min\{v_{ib} : b \in S\}$. If $S \neq \emptyset$ and~\mbox{$p_i <  \min\{v_{ib} : b \in S\}$}, then $(x,p)$ cannot be an optimal solution because one could increase the price of $i$ to obtain a strictly better solution. We conclude that we may assume, w.l.o.g., that $p_i = \min\{v_{ib} : b \in S\}$ for every item $i$ that is bought by a set $S$ of bidders.

We will present an IP formulation for the Max-Buying-Limited Problem that is heavily based in the observation above and in the next definitions. From this formulation, we will design a randomized rounding approximation algorithm.

\begin{definition}
  For an item $i$, let $\mathcal{S}(i) = \{(i,S) : S \subseteq B, \, |S| \leq C_i\}$. We call $(i,S) \in \mathcal{S}(i)$ a \emph{star} of $i$ and we denote by $\mathcal{S}$ the set of all stars, that is,~\mbox{$\mathcal{S} = \bigcup_{i \in I} \mathcal{S}(i)$}.  
\end{definition}

Note that, for $S \subseteq B$ and an item $i$, $S \cup \{i\}$ induces a star in the complete bipartite graph where the parts of the bipartition are $I$ and $B$.

\begin{definition}
  Let $(i,S)$ be a star of item $i$. We denote $\min\{v_{ib} : b \in S\}$ by~$P_{(i,S)}$, that is, the price of item $i$ when sold to the set $S$ of bidders.
\end{definition}

Notice that a feasible solution of the Max-Buying-Limited Problem can be seen as a collection of stars, one for each item, with every bidder in at most one star and the price of an item $i$ being $P_{(i,S)}$, where $(i,S)$ is the star selected for item $i$.

Next, we present our formulation, called \SF{}, in which we have a binary vector~$x$ of variables, with $|\mathcal{S}|$ positions, where $x_{(i,S)}$ is equal to $1$ if and only if the set of bidders that receive item $i$ is precisely $S$. The goal is to determine $x$ that 
\begin{center}
\begin{linearprogram}
\SF & \mbox{maximizes} & \multicolumn{3}{l}{$\displaystyle \sum_{(i,S) \in \mathcal{S}} |S|P_{(i,S)}x_{(i,S)}$}\\[1.5ex]
& \mbox{subject to} & \sum_{(i,S) \in \mathcal{S}(i)} x_{(i,S)} & = & 1,                             & \forall i \in I\\
& & \sum_{(i,S) \in \mathcal{S} : b \in S} x_{(i,S)}    & \leq & 1,  & \forall b \in B\\
& & x_{(i,S)} & \in & \{0,1\},                                               & \forall (i,S) \in \mathcal{S}.
\end{linearprogram}
\end{center}

The formulation \SF{} can be seen as a reduction of our problem to the Set Packing Problem~\cite{GareyJ79}. A similar idea was used by Hochbaum~\cite{Hochbaum82} to obtain an~\mbox{$\Oh(\log n)$-approximation} for the Metric Uncapacitated Facility Location Problem by reducing it to the Set Cover Problem~\cite{GareyJ79}. In our case, we can use the weight structure of the sets to obtain a constant factor approximation for our problem.

This formulation can have $\Omega(|I|2^{|B|})$ variables. But, fortunately, it is possible to solve its linear relaxation in polynomial time using a column generation procedure.

\newcounter{pricing}
\setcounter{pricing}{\value{theorem}}
\begin{lemma}\label{lemma:polynomial}
  The linear relaxation of formulation \SF{} can be solved in polynomial time.~\qed
\end{lemma}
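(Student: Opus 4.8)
The plan is to exploit that, although \SF{} has exponentially many variables, its linear relaxation has only $|I|+|B|$ constraints, so one can run the ellipsoid method on the \emph{dual}. First I would form the dual of the relaxation of \SF{}: assigning a variable $y_i$ (unrestricted in sign) to each equality constraint and a variable $z_b\ge 0$ to each bidder constraint, the dual is to minimize $\sum_{i\in I} y_i + \sum_{b\in B} z_b$ subject to
\[
  y_i + \sum_{b\in S} z_b \;\ge\; |S|\,P_{(i,S)} \qquad \text{for every star } (i,S)\in\mathcal{S},
\]
where the empty-star term $|\emptyset|\,P_{(i,\emptyset)}$ is read as $0$ (so the constraint for $(i,\emptyset)$ is just $y_i\ge 0$). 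This dual has only $|I|+|B|$ variables, so by the equivalence of separation and optimization it can be solved in polynomial time as soon as we provide a polynomial-time separation oracle for it.

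The separation oracle receives $(y,z)$ and must report a violated star constraint if one exists; equivalently, for each item $i$ it must compute $M_i := \max_{S\subseteq B,\,|S|\le C_i}\bigl(|S|\,P_{(i,S)} - \sum_{b\in S} z_b\bigr)$ and a maximizing $S$ whenever $M_i > y_i$. The obstacle is the coupling introduced by $P_{(i,S)} = \min\{v_{ib}:b\in S\}$. I would break it by guessing the minimum-valuation bidder of the optimal star: for each item $i$ and each bidder $b^\star$, put $t := v_{ib^\star}$, restrict to $B' := \{b\in B : v_{ib}\ge t\}$, and look for $S$ with $b^\star\in S\subseteq B'$ and $|S|\le C_i$. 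Any such $S$ has $P_{(i,S)} = t$, so the objective becomes the separable sum $\sum_{b\in S}(t - z_b)$, whose maximum is attained by keeping $b^\star$ and adding the at most $C_i-1$ bidders of $B'\setminus\{b^\star\}$ with the largest \emph{positive} values of $t - z_b$ --- a sorting. Taking the best choice over all $i$ and all $b^\star$ yields $M_i$ (every nonempty optimal $S$ has a smallest-valuation bidder, which is among the guesses) and a violated star whenever one exists, all in polynomial time.

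Finally I would invoke the Gr\"otschel--Lov\'asz--Schrijver theorem: the polynomial separation oracle yields a polynomial-time algorithm for the dual, whose ellipsoid run inspects only polynomially many dual constraints, i.e. only polynomially many primal columns $x_{(i,S)}$; solving the relaxation of \SF{} restricted to exactly those columns --- a polynomial-size linear program --- produces an optimal solution of the full relaxation. This is precisely the column-generation procedure. Feasibility and boundedness of the relaxation are immediate (setting $x_{(i,\emptyset)}=1$ for all $i$ is feasible, and the objective never exceeds $\sum_{i}|B|\max_{b} v_{ib}$), so nothing extra is needed there. I expect the guessing step that removes the inner $\min$ from $P_{(i,S)}$ to be the only part requiring thought; the rest is routine linear-programming machinery.
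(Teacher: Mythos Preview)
Your proposal is correct and follows essentially the same route as the paper: write the dual, observe it has only $|I|+|B|$ variables, give a polynomial-time separation oracle by guessing the minimum-valuation bidder of the violating star, and invoke Gr\"otschel--Lov\'asz--Schrijver to recover an optimal primal solution from the polynomially many columns touched by the ellipsoid method. The only cosmetic difference is that the paper additionally enumerates the star size $k$ and then picks the $k-1$ bidders of smallest dual value, whereas you avoid the extra enumeration by taking only bidders with positive marginal $t - z_b$; both yield the same maximizer.
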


We will use this IP formulation to develop an approximation algorithm for the Max-Buying-Limited Problem using probabilistic rounding. Next we present our algorithm.

\begin{codebox}
\Procname{$\proc{\Rounding}(I,B,v)$}
\li Let $x$ be an optimal solution of the linear relaxation of \SF{} for $(I,B,v)$
\li \For every item $i \in I$
\li   \Do Choose a star $S_i \in \mathcal{S}(i)$ with probability $\mathbb{P}(S_i = (i,S)) = x_{(i,S)}$
\li       Set the price of $i$ as $P_{S_i}$
      \End
\li \For every bidder $b \in B$
\li   \Do Let $i$ be the item such that $S_i = (i,S)$ with $b \in S$ and maximum $P_{S_i}$
\li       Sell item $i$ to bidder $b$
\li       \If there is no such item, bidder $b$ does not receive an item
      \End
\end{codebox}

\begin{theorem}
  \proc{\Rounding} is an $\frac{e}{e-1}$-approximation for the Max-Buying-Limited Problem.\label{thm:rounding}
\end{theorem}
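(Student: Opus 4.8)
The plan is to prove $\E[\SOL]\ge\bigl(1-\tfrac1e\bigr)\OPT$, where $\SOL$ denotes the (random) profit of the solution returned by $\proc{\Rounding}$; this is exactly the asserted $\tfrac{e}{e-1}$-approximation guarantee. First I would record two routine facts. \emph{Feasibility:} by construction each bidder buys at most one item, each item $i$ is sold only to bidders of its chosen star $S_i$ (hence to at most $C_i$ of them), and if $b$ buys $i$ then $b\in S_i$, so the price $P_{S_i}$ assigned to $i$ is at most $v_{ib}$; together with Lemma~\ref{lemma:polynomial} this makes $\proc{\Rounding}$ a polynomial-time algorithm returning a feasible solution. \emph{LP bound:} since \SF{} models the Max-Buying-Limited Problem (using the observation that we may take $p_i=\min\{v_{ib}:b\in S\}$), the optimum of its linear relaxation, $\mathrm{LP}:=\sum_{(i,S)\in\mathcal{S}}|S|\,P_{(i,S)}\,x_{(i,S)}$, is at least $\OPT$, so it suffices to show $\E[\SOL]\ge(1-\tfrac1e)\,\mathrm{LP}$.

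I would prove this bidder by bidder. Let $\pi_b$ be the price that bidder $b$ pays (zero if $b$ receives nothing); then $\SOL=\sum_{b\in B}\pi_b$, and writing $|S|=\sum_{b\in S}1$ gives $\mathrm{LP}=\sum_{b\in B}L_b$ with $L_b:=\sum_{(i,S)\in\mathcal{S}:\,b\in S}P_{(i,S)}\,x_{(i,S)}$, so it is enough to prove $\E[\pi_b]\ge(1-\tfrac1e)L_b$ for every $b$. By construction $\pi_b=\max\{P_{S_i}:i\in I,\ b\in S_i\}$. Since the stars $S_i$ are drawn independently over the items, for any price level $q$, putting $y_{i,q}:=\sum_{(i,S)\in\mathcal{S}(i):\,b\in S,\ P_{(i,S)}\ge q}x_{(i,S)}=\mathbb{P}(b\in S_i\text{ and }P_{S_i}\ge q)$, we get
\[
  \mathbb{P}(\pi_b<q)\;=\;\prod_{i\in I}(1-y_{i,q})\;\le\;\prod_{i\in I}e^{-y_{i,q}}\;=\;e^{-Y_q},\qquad Y_q:=\sum_{i\in I}y_{i,q},
\]
using $1-t\le e^{-t}$, and therefore $\mathbb{P}(\pi_b\ge q)\ge 1-e^{-Y_q}$.

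To conclude I would discretize over the distinct prices $q_1>q_2>\dots>q_k>0=:q_{k+1}$ appearing on the stars through $b$. Setting $w_\ell:=\sum_{(i,S):\,b\in S,\,P_{(i,S)}=q_\ell}x_{(i,S)}$ and $W_j:=\sum_{\ell\le j}w_\ell$, the packing constraint of \SF{} for $b$ gives $\sum_\ell w_\ell\le 1$, hence $0\le W_j\le 1$ and $Y_{q_j}=W_j$. A summation by parts turns the threshold bound into $\E[\pi_b]=\sum_{j=1}^k(q_j-q_{j+1})\,\mathbb{P}(\pi_b\ge q_j)\ge\sum_{j=1}^k(q_j-q_{j+1})(1-e^{-W_j})$, while the same rearrangement gives $L_b=\sum_{j=1}^k(q_j-q_{j+1})\,W_j$. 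Since $q_j-q_{j+1}>0$ and the concave map $w\mapsto 1-e^{-w}$ lies above the chord through $(0,0)$ and $(1,1-\tfrac1e)$ on $[0,1]$, i.e.\ $1-e^{-w}\ge(1-\tfrac1e)\,w$ for $w\in[0,1]$, we obtain $\E[\pi_b]\ge(1-\tfrac1e)L_b$ term by term; summing over $b\in B$ finishes the proof. The step I expect to need the most care is lining up this price-level discretization with the two summations by parts and invoking the packing constraint to keep every $W_j$ inside $[0,1]$, so that the chord inequality applies; the rest is bookkeeping.
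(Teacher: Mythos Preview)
Your argument is correct and noticeably cleaner than the paper's. Both proofs share the same scaffolding---decompose the LP value bidder by bidder as $L_b=\sum_{(i,S):\,b\in S}P_{(i,S)}x_{(i,S)}$ and show $\E[\pi_b]\ge(1-\tfrac1e)L_b$ using independence across items together with $1-t\le e^{-t}$ and the packing constraint $\sum_{(i,S):\,b\in S}x_{(i,S)}\le1$---but they carry out the per-bidder bound very differently. The paper orders the stars through $b$ by decreasing price and proves by induction that $\E[\profit(b)\mid \overline{E}_1,\ldots,\overline{E}_{\ell-1}]\ge f_\ell\sum_{i\ge\ell}P_ix_i$ with $f(x)=(1-e^{-x})/x$, unwinding a chain of conditional expectations; this takes several algebraic manipulations and an auxiliary monotonicity lemma for $h(x)=1-e^{-x}-e^{-x}xt$. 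You instead go straight to the tail: $\mathbb{P}(\pi_b\ge q)\ge 1-e^{-Y_q}$ from independence, then a layer-cake identity turns both $\E[\pi_b]$ and $L_b$ into the same $\sum_j(q_j-q_{j+1})(\cdot)$ form, and the concavity (chord) inequality $1-e^{-w}\ge(1-\tfrac1e)w$ on $[0,1]$ finishes it termwise. Your route is shorter and is essentially the classical randomized-rounding analysis for maximum coverage specialized to this setting; the paper's conditional-expectation recursion is heavier but dovetails directly with the derandomization in the appendix, which is built on exactly those conditionals.
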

\begin{proof}

First, notice that the objective function of \SF{} can be rewritten as~$\sum_{b \in B} \sum_{(i,S) \in \mathcal{S} : b \in S} P_{(i,S)}x_{(i,S)}$ and that this value is an upper bound on the value of an optimal solution of our problem. We will prove that the expected price paid by bidder $b$ is at least $\frac{e-1}{e}\sum_{(i,S) \in \mathcal{S} : b \in S} P_{(i,S)}x_{(i,S)}$, from where the result will follow.

Consider a bidder $b$ and a non-increasing (in $P_{(i,S)}$) ordering of the stars $(i,S)$ with $b \in S$ and $x_{(i,S)} > 0$. Let $k$ be the number of such stars. If $k = 0$, then the result trivially holds. From now on, we assume that $k > 0$. 

We will denote the $\ell$-th star on this ordering simply by $\ell$, its price by $P_\ell$ and its primal variable by $x_\ell$. If the star $\ell$ is $(i,S)$, we define $c(\ell) = i$, that is, $c(\ell)$ is the item of star $\ell$. Also, we define 
$y_\ell = \sum \{x_{\ell'} : \ell' < \ell  \mbox{ and } c(\ell') = c(\ell)\}$. 
Finally, we denote by $E_\ell$ the event in which star $\ell$ was chosen by \proc{\Rounding}.

Let $f(x) = \frac{1 - e^{-x}}{x}$. For some $1 \leq \ell \leq k$, we denote $f(\sum_{i = \ell}^k x_i)$ simply by~$f_{\ell}$. Notice that, using the fact that $1 - x \leq e^{-x}$, we conclude that $f_\ell \leq 1$ for every~\mbox{$1 \leq \ell \leq k$}. 

Let $\profit(b)$ be the profit that we obtain from bidder $b$. Notice that, for~\mbox{$1 \leq \ell \leq k$}, $\mathbb{E}[\profit(b)|\overline{E}_1,\ldots,\overline{E}_{\ell-1}, E_{\ell}] = P_\ell$ because \proc{\Rounding} will allocate $c(\ell)$ (or another item with the same price) to bidder $b$ because $c(\ell)$ is one of the most expensive items that has $b$ in the chosen star. Also, notice that~$\mathbb{P}(E_\ell | \overline{E}_1,\overline{E}_2,\ldots,\overline{E}_{\ell-1}) = \frac{x_\ell}{1 - y_\ell}$, because we choose the star of an item independently of the star chosen for other items.

Using the observations above, we will prove, by induction in $k - \ell$, that \mbox{$\mathbb{E}[\profit(b)|\overline{E}_1,\overline{E}_2,\ldots,\overline{E}_{\ell-1}] \geq f_\ell \sum_{i = \ell}^kP_ix_i$} for every $1 \leq \ell \leq k$. From this, we will conclude that $\mathbb{E}[\profit(b)] \geq f_1 \sum_{i = 1}^kP_ix_i$ and the result will follow.

If $\ell = k$, then $\mathbb{E}[\profit(b)|\overline{E}_1,\ldots,\overline{E}_{\ell-1}] = P_kx_k/(1 - y_k) \geq P_kx_k \geq f_kP_kx_k$. Now, for $\ell < k$, assume the result is valid for $\ell + 1$. We have that
\begin{align*}  
\E[\profit(b) | \overline{E}_1, \ldots, \overline{E}_{\ell-1}] & = 
\E[\profit(b) | \overline{E}_1, \ldots, \overline{E}_{\ell-1}, E_{\ell}]\,\frac{x_\ell}{1 - y_\ell} & \\
& +  \E[\profit(b) | \overline{E}_1, \ldots, \overline{E}_{\ell-1}, \overline{E}_{\ell}]\left(1 - \frac{x_\ell}{1 - y_\ell}\right)\\
& = P_\ell\,\frac{x_{\ell}}{1-y_\ell} + \left(1 - \frac{x_{\ell}}{1-y_\ell}\right)\E[\profit(b) | \overline{E}_1, \ldots, \overline{E}_{\ell}].
\end{align*}

Using the induction hypothesis, we have that 
\begin{align*}
\MoveEqLeft[12] P_\ell\,\frac{x_{\ell}}{1-y_\ell} + \left(1 - \frac{x_{\ell}}{1-y_\ell}\right)\E[\profit(b) | \overline{E}_1, \ldots, \overline{E}_{\ell}]\\
& \geq P_\ell\,\frac{x_{\ell}}{1-y_\ell} + \left(1 - \frac{x_{\ell}}{1-y_\ell}\right)f_{\ell + 1}\sum_{i = \ell + 1}^kP_ix_i\\
& =  \frac{x_\ell}{1 - y_\ell}\left(P_\ell - f_{\ell + 1}\sum_{i = \ell + 1}^kP_ix_i\right) + f_{\ell + 1}\sum_{i = \ell + 1}^kP_ix_i.
\end{align*}

Now, notice that, because $\sum_{i = \ell}^k x_i \leq 1$ and $P_\ell \geq P_i$ for all $i \geq \ell$, we have that $P_\ell \geq P_\ell\sum_{i = \ell}^kx_i \geq \sum_{i = \ell}^kP_ix_i$. Using the fact that $\frac{x_\ell}{1 - y_\ell} \geq x_\ell$, that $f_{\ell + 1} \leq 1$, and that $1 - x \leq e^{-x}$ for $x \in [0,1]$, we have that
\begin{align*}
\MoveEqLeft[10] \frac{x_\ell}{1 - y_\ell}\left(P_\ell - f_{\ell + 1}\sum_{i = \ell + 1}^kP_ix_i\right) + f_{\ell + 1}\sum_{i = \ell + 1}^kP_ix_i\\
& \geq x_\ell\left(P_\ell - f_{\ell + 1}\sum_{i = \ell + 1}^kP_ix_i\right) + f_{\ell + 1}\sum_{i = \ell + 1}^kP_ix_i\\
& \geq (1 - e^{-x_\ell})\left(P_\ell - f_{\ell + 1}\sum_{i = \ell + 1}^kP_ix_i\right) + f_{\ell + 1}\sum_{i = \ell + 1}^kP_ix_i\\
& =  P_\ell - e^{-x_\ell}P_\ell + e^{-x_\ell} f_{\ell + 1}\sum_{i = \ell + 1}^kP_ix_i\\
& =  P_\ell(1 - e^{-x_\ell} - e^{-x_\ell}x_\ell f_{\ell + 1}) + e^{-x_\ell}f_{\ell + 1}\sum_{i = \ell}^kP_ix_i.
\end{align*}

Before we proceed, we need to show that \mbox{$1 - e^{-x_\ell} - e^{-x_\ell}x_\ell f_{\ell + 1} \geq 0$}. Let~\mbox{$h(x) = 1 - e^{-x} - e^{-x}x t$} for some $0 < t \leq 1$. Notice that $h(0) = 0$ and that~\mbox{$h'(x) = e^{-x} + e^{-x}xt -e^{-x}t \geq e^{-x}xt \geq 0$}, that is, $h(x)$ is non-decreasing for non-negative $x$, from which we conclude that $h(x) \geq 0$ for every non-negative~$x$. Combining this with the fact that $P_\ell \geq \frac{\sum_{i = \ell}^kP_ix_i}{\sum_{i = \ell}^kx_i}$ we have that
\begin{align*}
\MoveEqLeft[8] P_\ell(1 - e^{-x_\ell} - e^{-x_\ell}x_\ell f_{\ell + 1}) + e^{-x_\ell}f_{\ell + 1}\sum_{i = \ell}^kP_ix_i\\
& \geq  \frac{\sum_{i = \ell}^kP_ix_i}{\sum_{i = \ell}^kx_i}\left(1 - e^{-x_\ell} - e^{-x_\ell}x_\ell f_{\ell + 1}\right) + e^{-x_\ell}f_{\ell + 1}\sum_{i = \ell}^kP_ix_i\\
& =  \left(1 - e^{-x_\ell} - e^{-x_\ell}x_\ell f_{\ell + 1} + e^{-x_\ell}f_{\ell + 1}\sum_{i = \ell}^kx_i\right) \frac{\sum_{i = \ell}^kP_ix_i}{\sum_{i = \ell}^kx_i}\\
& =  \left(1 - e^{-x_\ell}\left(1 - f_{\ell + 1}\sum_{i = \ell + 1}^kx_i \right)\right) \frac{\sum_{i = \ell}^kP_ix_i}{\sum_{i = \ell}^kx_i}.
\end{align*}

Now, recall that $f_{\ell + 1}\sum_{i = \ell + 1}^k x_i = 1 - e^{-\sum_{i=\ell+1}^kx_i}$, so we conclude that
\[
\left(1 - e^{-x_\ell}\left(1 - f_{\ell + 1}\sum_{i = \ell + 1}^kx_i \right)\right)
\frac{\sum_{i = \ell}^kP_ix_i}{\sum_{i = \ell}^kx_i}
= f_\ell\sum_{i = \ell}^kP_ix_i.
\]

That is, we have that  $\mathbb{E}[\profit(b)|\overline{E}_1,\overline{E}_2,\ldots,\overline{E}_{\ell-1}] \geq f_\ell \sum_{i = \ell}^kP_ix_i$. 
From this and the fact that $f(x) \geq \frac{e - 1}{e}$ for every $0 < x \leq 1$, we conclude that
\[
\mathbb{E}[\profit(b)] \geq \frac{1 - e^{-\sum_{i = 1}^kx_i}}{\sum_{i = 1}^kx_i}\sum_{i = 1}^kP_ix_i \geq \frac{e - 1}{e}\sum_{i = 1}^kP_ix_i
\]
and the result follows.
\qed \end{proof}

Also, it is easy to prove that the analysis is tight, as we do in the next lemma.

\begin{lemma}
  For every $\varepsilon > 0$, there is an instance where the expected profit of a solution found by \proc{\Rounding} is smaller than  $((e-1)/e + \varepsilon)\mathrm{OPT}$, where $\mathrm{OPT}$ is the value of an optimal solution for this instance.
\end{lemma}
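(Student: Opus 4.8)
Here is how I would prove the tightness lemma.

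The plan is to display a single symmetric family of instances, one for each integer $n$, on which an \emph{optimal} solution of the relaxation of \SF{} that \proc{\Rounding} may select in line~1 is the fully \emph{spread-out} one, and on which the rounding then recovers only a $\bigl(1-(1-1/n)^n\bigr)$-fraction of $\mathrm{OPT}$; since $1-(1-1/n)^n$ tends to $(e-1)/e$ from above, this gives the claim. Concretely I would take $I=\{1,\dots,n\}$ with $C_i=1$ for every $i$, a bidder set $B=\{b_1,\dots,b_n\}$, and the all-ones valuation matrix $v_{ib_j}=1$.

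First I would compute $\mathrm{OPT}$: every item has a single copy and every valuation equals $1$, so no feasible solution has profit larger than $n$, while pricing every item at $1$ and allocating item $i$ to bidder $b_i$ is feasible with profit $n$; hence $\mathrm{OPT}=n$. Next I would exhibit the LP solution that drives the bad behaviour: set $x^*_{(i,\{b_j\})}=1/n$ for all $i,j$ and every other coordinate to $0$. This is feasible for the relaxation ($\sum_j x^*_{(i,\{b_j\})}=1$ for each item and $\sum_i x^*_{(i,\{b_j\})}=1$ for each bidder), and, using the rewriting of the objective as $\sum_b\sum_{(i,S):b\in S}P_{(i,S)}x_{(i,S)}$ together with the bidder constraints and $P_{(i,S)}\le 1$, its value $n$ is optimal; so $x^*$ is a legitimate output of line~1.

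Then I would analyse \proc{\Rounding} run on $x=x^*$. Each item independently picks a uniformly random bidder as its star and is priced at $1$, so the profit extracted from a fixed bidder $b_j$ equals $1$ if some item picked $b_j$ and $0$ otherwise; by independence across items this event has probability $1-(1-1/n)^n$, and linearity of expectation gives $\E[\profit]=n\bigl(1-(1-1/n)^n\bigr)$. Since $(1-1/n)^n$ increases to $1/e$, given $\varepsilon>0$ one picks $n$ with $1/e-(1-1/n)^n<\varepsilon$, whence $\E[\profit]=n\bigl((e-1)/e+(1/e-(1-1/n)^n)\bigr)<\bigl((e-1)/e+\varepsilon\bigr)\mathrm{OPT}$. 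The one point to state with care — and the only thing resembling an obstacle — is that line~1 of \proc{\Rounding} may select \emph{any} optimal LP solution, and on these instances there are also integral optima (the perfect matchings) on which the rounding is perfect; so the lemma asserts the existence of an instance together with a valid run, and its content is precisely that the most \emph{uncommitted} optimal LP solution already forces the loss predicted by the $e/(e-1)$ analysis.
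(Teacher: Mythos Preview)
Your proof is correct and follows essentially the same approach as the paper: exhibit a symmetric instance on which a spread-out optimal LP solution forces the rounding to recover only a $1-(1-1/n)^n$ fraction of $\mathrm{OPT}$, and let $n\to\infty$. The paper's instance is slightly simpler --- it uses a single bidder $b$ with $|I|$ items all valued at $1$, takes $x_{(i,\{b\})}=1/|I|$ and $x_{(i,\emptyset)}=1-1/|I|$, and observes that $b$ pays $1$ unless every item chooses the empty star --- but the probabilistic computation and the limit are identical to yours, and your explicit acknowledgement that line~1 may return other optima (on which the rounding is exact) applies equally to the paper's construction.
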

\begin{proof}
  Consider this simple instance: we have a set of items $I$ and only one bidder $b$ such that $v_{ib} = 1$, for every $i \in I$. It is easy to see that an optimal solution for this instance has value $1$. It is also clear that an optimal solution of the linear relaxation has value $1$. One of such optimal solutions is $x$ such that $x_{(i,\{b\})} = 1/|I|$ and $x_{(i,\emptyset)} = 1 - 1/|I|$, for every item $i \in I$. Notice that bidder $b$ pays $1$ if any of the stars $(i,\{b\})$ is chosen and pays $0$ (because is unallocated) otherwise. Thus we have
  \[\mathbb{E}[\profit(b)] = 1 - \left(1 - \frac{1}{|I|}\right)^{|I|} \xrightarrow{|I| \rightarrow \infty} 1 - \frac{1}{e}\]
  and the result follows. \qed
\end{proof}

\section{An Algorithm for Limited Supply with a Price Ladder}\label{sec:lc-pl}

In this section we present, for every $\varepsilon > 0$, a $(2+\varepsilon)$-approximation for the \mbox{Max-Buying-PL-Limited} Problem. We use some ideas from the $4$-approximation algorithm for \mbox{Max-Buying-PL-Limited} Problem developed by Aggarwal et al.~\cite{AggarwalFMZ04}, but in a different way, in order to obtain a better approximation ratio.

\begin{definition} 
  For a valuation matrix $v$, a positive rational $\alpha > 1$, a posite integer $t$, and every non-negative integer $k$, let~\mbox{$d_k = \max\{v_{ib}: i \in I, b \in B\}/\alpha^{k}$}. The \emph{\mbox{Max-Buying-PL-Limited-$(\alpha,t)$} Problem} is a variant of the Max-Buying-PL-Limited Problem where every price is chosen from the set $\{d_0, d_1, \dots\}$ and every bidder receives, for every non-negative~$r$, at most one item with price in~$\{d_{rt}, d_{rt+1}, \ldots, d_{(r+1)t-1}\}$.
\end{definition}

Notice that, in the Max-Buying-PL-Limited-$(\alpha,t)$ Problem, a bidder can receive more than one item but, for every multiple $s$ of $t$, it can receive at most one item with price in $\{d_s, d_{s+1}, \ldots, d_{s + t - 1}\}$.

We start proving that we need to consider only a finite number of possible prices.

\newcounter{dlowerbound}
\setcounter{dlowerbound}{\value{theorem}}
\begin{lemma}\label{lemma:dl}
  Consider an instance of the Max-Buying-PL-Limited-$(\alpha,t)$ Problem. There is a non-negative integer $\ell$, whose value is bounded by a function that is linear on the length of the representation of the instance, such that there is an optimal solution where the price of every item is at least $d_{\ell}$.~\qed
\end{lemma}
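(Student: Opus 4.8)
The plan is to show that prices which are very small contribute negligibly to the profit, and can therefore be raised to a fixed threshold $d_\ell$ without loss (or rather, without losing optimality after a suitable rounding argument). First I would fix an optimal solution $(x,p)$ of the Max-Buying-PL-Limited-$(\alpha,t)$ Problem and let $M = \max\{v_{ib} : i \in I, b \in B\} = d_0$. Since at most $|B|$ bidders buy and every bidder buys at most $\lceil (\ell+1)/t \rceil$ items among the first $\ell+1$ price levels, the total profit contributed by items priced at $d_k$ for $k \geq \ell$ is at most (number of such sales) times $d_\ell$, which is crudely bounded by $|I|\,|B|\cdot d_\ell = |I|\,|B|\cdot M/\alpha^\ell$. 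On the other hand, the optimal value is at least the profit of the trivial solution that prices every item at $M/\alpha^{?}$... more carefully, there is always a single sale of value at least $d_1 = M/\alpha$ available (sell the item achieving the maximum valuation to that bidder at price $d_1 \le M$), so $\OPT \geq M/\alpha$.

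Next I would choose $\ell$ so that $|I|\,|B|\cdot M/\alpha^\ell$ is much smaller than $M/\alpha$ — concretely, it suffices to take $\ell$ with $\alpha^{\ell-1} > |I|\,|B|$, i.e. $\ell = 1 + \lceil \log_\alpha(|I|\,|B|) \rceil$ or so; this $\ell$ is indeed linear in the length of the instance's representation, since $\log_\alpha(|I|\,|B|)$ is linear in $\log|I| + \log|B|$ and $\alpha > 1$ is a fixed rational of the instance. With this choice, any item priced below $d_\ell$ in the optimal solution contributes, in aggregate, strictly less than one "unit" of the kind that is freely available. The final step is to modify the optimal solution: take every item with $p_i < d_\ell$, and either raise its price to $d_\ell$ keeping the same buyers if they remain feasible, or simply unallocate it. The cleanest route is: delete from the solution all sales of items priced strictly below $d_\ell$; the profit lost is at most $|I|\,|B|\cdot d_\ell < \OPT$ — but that's too lossy. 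So instead I would argue that raising the price of every such item to exactly $d_\ell$ and dropping only the bidders for whom it becomes infeasible can only help: each retained sale now pays at least $d_\ell$, and the price-ladder constraint is preserved because these were the cheapest items anyway (and $d_\ell$ is still $\le$ all prices that were already $\ge d_\ell$, provided we are careful about ties — we may need to also bump items at exactly $d_\ell$ consistently). Since the original contribution of these items was at most $|I||B| d_\ell / \OPT < 1$ fraction bounded appropriately, the net change is non-negative, so we obtain an optimal solution with all prices $\ge d_\ell$.

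The main obstacle, and the place where I expect the argument to need the most care, is the price-ladder feasibility when we raise a block of cheap prices up to $d_\ell$: one must check that after the modification the inequalities $p_1 \ge p_2 \ge \cdots \ge p_n$ still hold, which requires that the items being raised form a suffix of the ordering (the cheapest items), and that no item that was already priced in $[d_\ell, d_{\ell-1})$ gets "overtaken." A slicker way to sidestep this entirely is to prove the bound in the contrapositive form: show directly that an optimal solution cannot have an item priced below $d_\ell$ by exhibiting, for any solution that does, a strictly better one — but the block-raising issue resurfaces there too. I would also double-check the counting of how many sales can occur at low price levels, since a bidder may buy one item per block of $t$ consecutive levels, but this only changes the bound by a factor of at most $\lceil (\ell+1)/t\rceil \le \ell+1$, which is absorbed by enlarging $\ell$ by an additive $\log_\alpha(\ell+1)$ term, still linear in the input length. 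So the quantitative estimate is robust; the delicate part is purely the structural check that the modified solution respects the ladder.
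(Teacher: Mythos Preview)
Your proposal has a genuine gap: the profit-counting argument you sketch shows only that the modified solution is \emph{close} to optimal, not that it \emph{is} optimal, which is what the lemma asserts. Concretely, after raising the cheap prices to $d_\ell$ and dropping bidders who become infeasible, the profit could drop by as much as $|I|\,|B|\,d_\ell$; bounding this quantity below $\OPT$ (or below any positive fraction of $\OPT$) does not give a non-negative net change. Your sentence ``the net change is non-negative, so we obtain an optimal solution'' is exactly the unsupported step --- nothing in the counting argument rules out a strictly positive loss, and with your choice $\ell \approx \log_\alpha(|I|\,|B|)$ the threshold $d_\ell \approx M/(|I|\,|B|)$ can easily exceed some valuations, so bidders really can be lost.

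The missing idea, which the paper exploits, is that the valuation matrix is by definition \emph{integer}-valued. One simply picks $\ell$ so that $d_\ell < 1 \le d_{\ell-1}$; then $\ell \le 1 + k/\log_2\alpha$ where $k$ is the bit-length of the instance, so $\ell$ is linear in the input size. With this choice, any bidder $b$ who was buying item $i$ at price $p_i \in (0,d_\ell)$ has $v_{ib} \ge p_i > 0$, hence $v_{ib} \ge 1 > d_\ell$, so raising $p_i$ to $d_\ell$ keeps $i$ feasible for $b$ and \emph{no bidder is dropped}. The modified solution therefore has profit at least that of the original optimum, and the price ladder is trivially preserved since only a suffix of prices is raised to a common value below all the untouched prices. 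Your worry about the ladder was misplaced; the real obstacle was ensuring no sale is lost, and integrality handles that for free.
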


We will prove this problem can be solved in polynomial time.

\begin{theorem}\label{thm:alphat}
  The Max-Buying-PL-Limited-$(\alpha,t)$ Problem can be solved in polynomial time.
\end{theorem}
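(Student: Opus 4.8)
The plan is to combine the finiteness supplied by Lemma~\ref{lemma:dl} with a dynamic program over the price \emph{blocks} $\{d_{rt},d_{rt+1},\dots,d_{(r+1)t-1}\}$. By Lemma~\ref{lemma:dl} we may restrict to prices in the polynomially-sized set $\{d_0,\dots,d_\ell\}$, hence to the blocks $r=0,1,\dots,\lceil(\ell+1)/t\rceil-1$. The key structural observation I would record first is that, because of the price ladder, the items whose price lies in block~$r$ form a contiguous interval of item indices, and as $r$ increases these intervals constitute a left-to-right partition of $\{1,\dots,n\}$; the ladder between consecutive blocks is automatic since any price in block~$r$ is at least $d_{(r+1)t-1}>d_{(r+1)t}$, an upper bound on every price in block~$r{+}1$. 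Consequently, once the partition of items into blocks is fixed, the remaining coupling across blocks is only ``at most one item per block per bidder'', so the allocation splits into one independent capacitated assignment problem per block.

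This gives a dynamic program with state $(r,j)$ and value $D[r][j]$ equal to the maximum profit obtainable by pricing items $1,\dots,j$ using only blocks $0,\dots,r-1$; the transition chooses the (possibly empty) interval $j+1,\dots,j'$ of items handled by block~$r$ and sets $D[r{+}1][j']=\max_{j\le j'}\bigl(D[r][j]+W(r,j+1,j')\bigr)$, where $W(r,a,b)$ is the optimum of the \emph{block subproblem}: given the contiguous item set $\{a,\dots,b\}$ and the $t$ candidate prices $q_0>\cdots>q_{t-1}$ of block~$r$, choose a non-increasing price from these $t$ values for each item and an assignment of bidders to items so that each bidder gets at most one of these items, each item $i$ is sold to at most $C_i$ bidders that value it at least its price, maximizing the revenue. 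Only polynomially many states and intervals occur, so the procedure runs in polynomial time as soon as each $W(r,a,b)$ is computable in polynomial time, and an optimal solution of the instance is recovered from the optimal DP path (using Lemma~\ref{lemma:dl}).

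The crux, and the step I expect to be the main obstacle, is solving the block subproblem. For a \emph{fixed} price assignment to the items of the block the best bidder allocation is just a transportation (capacitated bipartite $b$-matching) problem, solvable in polynomial time with an integral optimum; the difficulty is the joint choice of the $t$ prices subject to the ladder. My approach would be to write a single linear program for the block, with indicator variables $\zeta_{i,s}$ for ``item $i$ is priced $q_s$'' (one per item, plus the staircase inequalities $\sum_{s'\le s}\zeta_{i+1,s'}\le\sum_{s'\le s}\zeta_{i,s'}$ that encode the ladder), assignment variables $x_{i,b'}$ with $\sum_i x_{i,b'}\le 1$ and $x_{i,b'}\le\sum_{s\le\lambda_{i,b'}}\zeta_{i,s}$ (where $\lambda_{i,b'}$ is the largest $s$ with $q_s\le v_{i,b'}$), and linking variables $y_{i,s}$ with $y_{i,s}\le C_i\zeta_{i,s}$ and $\sum_s y_{i,s}=\sum_{b'}x_{i,b'}$, and objective $\max\sum_{i,s}q_s y_{i,s}$; then one must show this polytope has an integral optimum --- equivalently, realize the block subproblem as a min-cost flow on a suitable staircase network. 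Proving that integrality, i.e.\ that the ladder gadget can be glued onto the capacitated matching part without destroying it, is the technical heart of the theorem; the outer dynamic program and the reduction to finitely many prices are routine by comparison.
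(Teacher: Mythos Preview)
Your outer dynamic program --- partition the items into contiguous intervals, one per price block, via a DP with state $(r,j)$ --- is exactly what the paper does (they write it as $F(j,r)$ with the same recurrence, and take $F(|I|,\ell)$ as the optimum after invoking Lemma~\ref{lemma:dl}). The divergence, and the gap, is entirely in how you handle the block subproblem $W(r,a,b)$.

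You propose a single LP mixing price-indicator variables $\zeta_{i,s}$, staircase constraints for the ladder, and assignment/linking variables, and you correctly flag that the whole argument would then rest on showing this polytope has integral optima. You do not prove this, and there is no reason to expect it: the coupling constraints $y_{i,s}\le C_i\zeta_{i,s}$ and $x_{i,b'}\le\sum_{s\le\lambda_{i,b'}}\zeta_{i,s}$ glue a bipartite-matching polytope to a staircase polytope in a way that is neither an obvious network flow nor obviously totally unimodular (variable-upper-bound constraints of the form $y\le C\zeta$ with $C>1$ are a classic source of fractionality). So the proposal has a genuine hole at precisely the point you call ``the technical heart''.

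The paper sidesteps this with a much simpler observation you overlooked: within one block there are only $t$ candidate prices, and $\alpha,t$ are treated as fixed constants. A non-increasing assignment of $t$ values to the items $a,\dots,b$ is determined by $t-1$ breakpoints, so there are at most $|I|^{t}$ such pricings --- polynomially many. Hence one simply \emph{enumerates} all pricings for the block; for each fixed pricing the optimal allocation is exactly the capacitated bipartite $b$-matching you already noted, solvable in polynomial time. No integrality argument beyond standard $b$-matching is needed. The ``technical heart'' you anticipated dissolves once you exploit that $t$ is constant.
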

\begin{proof}
Consider items $i$ and $j$ such that $i \leq j$ and a non-negative integer $r$. Let~$P(i,j,r)$ denote the maximum profit achievable for selling items~\mbox{$i, i+1, \ldots, j$} with a pricing such that $p_s \in \{d_{rt}, d_{rt +1}, \ldots, d_{(r+1)t-1}\}$ for every item $s$ and~\mbox{$p_i \geq p_{i+1} \geq \cdots \geq p_j$}. Also let $F(j,r)$ be the maximum profit achievable considering only prices between $d_0$ and $d_{(r+1)t - 1}$ and items between $1$ and $j$. We have the following recurrence:
\[
F(j,r) = \left\{
\begin{array}{ll}
0, & \mbox{if } j = 0,\\
P(1,j,r), & \mbox{if } j > 0 \mbox{ and } r = 0,\\
\displaystyle \max_{0 \leq i \leq j} \{F(i,r-1) + P(i+1,j,r)\}, & \mbox{otherwise.}
\end{array}
\right.
\]
Let $\ell$ be as in Lemma~\ref{lemma:dl} and notice that $F(|I|,\ell)$ is the value of an optimal solution for the \mbox{Max-Buying-PL-Limited-$(\alpha,t)$} Problem. For fixed $t$ and $\alpha$, if we can compute $P(i,j,r)$ in polynomial time, then this recurrence can be solved in polynomial time.

In order to compute $P(i,j,r)$ in polynomial time, we can enumerate every possible pricing (there are at most $|I|^t$ such pricings) and construct a bipartite graph $G$ with bipartition sides $\{i,\ldots, j\}$ and $B$, where, for every item $k$ in~$\{i,\ldots,j\}$ and every bidder $b \in B$, we have an edge $\{i,b\} \in E(G)$ of weight~$p_i$ if and only if $v_{ib} \geq p_i$. Then it remains to find a maximum weighted $b$-matching~\cite{BurkardDM09,Kuhn55} on such graph, where every bidder is matched to at most one item and every item $i$ is matched to at most $C_i$ bidders.\qed \end{proof}

We will now establish some relations involving the value of an optimal solution of the Max-Buying-PL-Limited Problem and the value of an optimal solution of the \mbox{Max-Buying-PL-Limited-$(\alpha,t)$} Problem.

\newcounter{alphaloss}
\setcounter{alphaloss}{\value{theorem}}
\begin{lemma}\label{lema1}
  Let \opt{} be the value of an optimal solution for the Max-Buying-PL-Limited Problem and $\opt'$ be the value of an optimal solution for the Max-Buying-PL-Limited-$(\alpha,t)$ Problem. We have that $\opt' \geq \opt/\alpha$.~\qed
\end{lemma}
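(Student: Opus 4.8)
The plan is to take an optimal solution $(x,p)$ for the Max-Buying-PL-Limited Problem of value $\opt$ and produce from it a feasible solution for the Max-Buying-PL-Limited-$(\alpha,t)$ Problem of value at least $\opt/\alpha$; since $\opt'$ is the optimum over all such solutions, this gives $\opt' \geq \opt/\alpha$. The natural transformation is to round every price down to the nearest power-of-$\alpha$ grid point: for an item $i$ with price $p_i$, let $p'_i = d_{k(i)}$ where $k(i)$ is the smallest non-negative integer with $d_{k(i)} \leq p_i$ (such a $k(i)$ exists because $p_i \leq \max\{v_{ib}\} = d_0$, using the observation from Sect.~\ref{sec:lc} that $p_i$ equals $\min\{v_{ib}: b \in S\}$ for the set $S$ that buys $i$, hence $p_i \le v_{ib} \le d_0$ for any such bidder; if no bidder buys $i$ we may simply set $p_i$ to $0$ and it contributes nothing). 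Keep the same allocation $x$.

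The next step is to check that $(x,p')$ is feasible for the $(\alpha,t)$-variant. The price-ladder condition $p_1 \geq p_2 \geq \cdots$ is preserved because rounding down is monotone: $p_i \geq p_j$ implies $k(i) \leq k(j)$ implies $p'_i \geq p'_j$. Every price now lies in $\{d_0, d_1, \dots\}$ by construction. The capacity constraints (at most $C_i$ copies of item $i$, at most one item per bidder) are untouched since we did not change $x$. Feasibility of each allocated item for its bidder is preserved because $p'_i \leq p_i \leq v_{i b}$ for every bidder $b$ receiving $i$. Finally, the extra constraint of the $(\alpha,t)$-variant — at most one item per bidder with price in each block $\{d_{rt},\dots,d_{(r+1)t-1}\}$ — holds trivially: each bidder receives at most one item in total in the solution $(x,p')$ inherited from the Max-Buying-PL-Limited Problem, so a fortiori at most one per block.

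It remains to bound the profit. For each item $i$ that is sold (to $|S_i|$ bidders), its per-copy price drops from $p_i$ to $p'_i \geq p_i/\alpha$, because $p'_i = d_{k(i)}$ and $d_{k(i)-1} > p_i$ (by minimality of $k(i)$, when $k(i) \geq 1$), and $d_{k(i)-1} = \alpha\, d_{k(i)}$, so $p_i < \alpha\, d_{k(i)} = \alpha\, p'_i$; when $k(i) = 0$ we have $p'_i = d_0 \geq p_i$ outright. Summing the per-copy contributions $|S_i|\, p'_i \geq |S_i|\, p_i/\alpha$ over all items gives total profit at least $\opt/\alpha$, and since this is a feasible solution for the $(\alpha,t)$-variant, $\opt' \geq \opt/\alpha$. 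I do not anticipate a serious obstacle here; the only point requiring a little care is making sure $k(i)$ is well-defined (i.e.\ that $p_i$ is not larger than $d_0$), which is exactly where one invokes the structural observation that optimal prices equal a minimum valuation, and handling unsold items separately.
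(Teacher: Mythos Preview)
Your proposal is correct and follows exactly the same approach as the paper: round every price in an optimal Max-Buying-PL-Limited solution down to the nearest $d_k$, keep the allocation, and observe that each price drops by at most a factor of~$\alpha$. You simply spell out in more detail the feasibility checks (price ladder, capacities, the per-block constraint being trivially satisfied since each bidder gets at most one item) that the paper leaves implicit.
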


\begin{lemma}\label{lema2}
  Consider an optimal solution of the Max-Buying-PL-Limited-$(\alpha,t)$ Problem and let $\opt'$ be the value of such solution. There is a solution of the Max-Buying-PL-Limited Problem, that can be computed in polynomial time, of value $\sol$ such that 
  $\sol \geq \frac{\alpha^t - 1}{\alpha^t - 1 + \alpha^{t-1}}\opt'.$
\end{lemma}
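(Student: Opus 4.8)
The plan is to start from an optimal solution $(x',p')$ of the Max-Buying-PL-Limited-$(\alpha,t)$ Problem of value $\opt'$ and turn it, \emph{keeping the pricing $p'$ untouched}, into a feasible solution of the Max-Buying-PL-Limited Problem by thinning out the allocation. Concretely, for each bidder $b$ the $(\alpha,t)$-solution assigns to $b$ at most one item from each block $\{d_{rt},\ldots,d_{(r+1)t-1}\}$; I would keep for $b$ only the item of largest price among those assigned to $b$ (ties broken arbitrarily) and discard the rest. Since $p'$ is already non-increasing and every price lies in $\{d_0,d_1,\ldots\}$, the pricing is valid for the Max-Buying-PL-Limited Problem; and since we only delete bidder--item pairs, every item is still sold to at most $C_i$ bidders, every bidder now receives at most one item, and every item still assigned to a bidder is still feasible for that bidder. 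The transformation is clearly computable in polynomial time.

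It then remains to bound the loss in profit, which I would do bidder by bidder. Fix a bidder $b$ and let $q_1 > q_2 > \cdots > q_\ell$ be the prices of the items $b$ received in the $(\alpha,t)$-solution; these lie in distinct blocks, say blocks $r_1 < r_2 < \cdots < r_\ell$, so the new solution collects $q_1$ from $b$ whereas the old one collected $\sum_j q_j$. Because $q_j$ lies in block $r_j$ we have $q_j \le d_{r_j t}$, and because $q_1$ lies in block $r_1$ we have $q_1 \ge d_{(r_1+1)t-1}$; combining these with $r_j - r_1 \ge j-1$ gives $q_j \le q_1\,\alpha^{t-1}\,\alpha^{-(j-1)t}$ for $j \ge 2$. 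Summing the geometric series yields $\sum_{j\ge 2} q_j \le q_1\,\alpha^{t-1}/(\alpha^t-1)$, hence $\sum_j q_j \le q_1\,(\alpha^t-1+\alpha^{t-1})/(\alpha^t-1)$, that is, $q_1 \ge \frac{\alpha^t-1}{\alpha^t-1+\alpha^{t-1}}\sum_j q_j$. Summing this inequality over all bidders gives $\sol \ge \frac{\alpha^t-1}{\alpha^t-1+\alpha^{t-1}}\opt'$, as desired.

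The only real subtlety, and the step I would be most careful with, is getting the block arithmetic exactly right, in particular pinning down the worst case: $b$'s most expensive retained item sits at the very bottom of its block (price $d_{(r_1+1)t-1}$) while the next retained item sits at the top of the following block (price $d_{(r_1+1)t} = d_{(r_1+1)t-1}/\alpha$), which is precisely why the factor $\alpha^{t-1}$, rather than $\alpha^{t}$, appears in the bound. Everything else (feasibility of the thinned solution, its polynomial-time construction, and the summation over bidders) is routine.
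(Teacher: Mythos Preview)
Your proposal is correct and follows essentially the same approach as the paper: keep the pricing, assign each bidder only the most expensive item it received in the $(\alpha,t)$-solution, and bound the per-bidder loss via the block structure and a geometric series. The paper carries out the block arithmetic by index-shifting (bounding $\sum_{k\in K} d_k$ by $d_i + \sum_{r\ge 0} d_{i+rt+1}$), while you phrase it as an explicit worst-case placement of $q_1$ at the bottom of its block and each $q_j$ at the top of theirs; both yield the identical factor $\frac{\alpha^t-1}{\alpha^t-1+\alpha^{t-1}}$.
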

\begin{proof}
  Consider an optimal solution of the Max-Buying-PL-Limited-$(\alpha, t)$ Problem. We will construct a solution for the Max-Buying-PL-Limited Problem by assigning to every bidder $b$ the most expensive item bought by $b$ (which can be done in polynomial time).
  
  For a bidder $b$, let $K$ be the set of integers such that $k \in K$ if and only if~$b$ bought an item of price $d_k$ and let $d_i$ denote the most expensive item bought by~$b$. Remember that, for every $r$, a bidder can buy at most one item of price~$d_{rt + k}$ for $0 \leq k < t$. From this we conclude that
  \[\sum_{k \in K}d_k \leq d_i + \sum_{k \in K\setminus\{i\}}d_{\lfloor k/t \rfloor t} \leq d_i + \sum_{r \geq 0}d_{i+rt+1}.\]

Now, notice that $d_{i+rt+1} =  d_i/\alpha^{rt+1}$, from which we conclude that
\[
\sum_{k \in K}d_k 
\leq d_i + \sum_{r \geq 0}d_{i+rt+1}
= d_i \left(1 + \frac{1}{\alpha}\sum_{r \geq 0}\left(\frac{1}{\alpha^t}\right)^{r} \right)
\leq d_i \left(1 + \frac{1}{\alpha}\left(\frac{1}{1 -\frac{1}{\alpha^t}}\right)\right).
\]

Notice that the profit obtained from $b$ in the solution of the Max-Buying-PL-Limited-$(\alpha,t)$ is exactly $\sum_{k \in K}d_k$ and the profit obtained from $b$ in the solution found for the \mbox{Max-Buying-PL-Limited} Problem is exactly $d_i$. But we concluded that $d_i \geq \frac{\alpha^t - 1}{\alpha^t - 1 + \alpha^{t-1}}\sum_{k \in K}d_k$, so the result follows.
\qed \end{proof}

Combining the results of Lemmas~\ref{lema1} and~\ref{lema2}, we can derive an approximation for the Max-Buying-PL-Limited Problem.

\begin{corollary}\label{cor:pllc}
  For every positive integer $t$ and rational $\alpha > 1$, there is an~\mbox{$\frac{\alpha(\alpha^t - 1 + \alpha^{t-1})}{\alpha^t - 1}$}-approximation for the Max-Buying-PL-Limited Problem.
\end{corollary}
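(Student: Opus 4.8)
The plan is to obtain the claimed approximation simply by chaining together the three results just established: the exact polynomial-time solvability of the $(\alpha,t)$-variant (Theorem~\ref{thm:alphat}), the lower bound $\opt' \geq \opt/\alpha$ relating the two optima (Lemma~\ref{lema1}), and the rounding step that converts a solution of the $(\alpha,t)$-variant into a genuine Price-Ladder-Limited solution while losing at most a factor $\frac{\alpha^t-1+\alpha^{t-1}}{\alpha^t-1}$ (Lemma~\ref{lema2}). The algorithm itself is then immediate: given an instance of the Max-Buying-PL-Limited Problem, run the dynamic-programming procedure of Theorem~\ref{thm:alphat} to compute an optimal solution of the associated Max-Buying-PL-Limited-$(\alpha,t)$ Problem, and then apply the polynomial-time transformation of Lemma~\ref{lema2} (assign to each bidder the most expensive item it bought).

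First I would argue correctness of the returned value. Let $\opt$ and $\opt'$ be the optimal values of the PL-Limited and the $(\alpha,t)$-variant instances respectively, and let $\sol$ be the value of the solution our algorithm outputs. By Lemma~\ref{lema2}, $\sol \geq \frac{\alpha^t - 1}{\alpha^t - 1 + \alpha^{t-1}}\,\opt'$, and by Lemma~\ref{lema1}, $\opt' \geq \opt/\alpha$. Combining these two inequalities gives
\[
\sol \;\geq\; \frac{\alpha^t - 1}{\alpha^t - 1 + \alpha^{t-1}}\cdot\frac{\opt}{\alpha}
\;=\; \frac{\alpha^t - 1}{\alpha(\alpha^t - 1 + \alpha^{t-1})}\,\opt,
\]
so that $\opt/\sol \leq \frac{\alpha(\alpha^t - 1 + \alpha^{t-1})}{\alpha^t - 1}$, which is exactly the ratio claimed (note that for $\alpha>1$ and $t\geq 1$ the denominator $\alpha^t-1$ is positive, so the expression is well defined and the solution we produce is feasible for the PL-Limited Problem by construction).

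For the running time, I would observe that Theorem~\ref{thm:alphat} already guarantees the first phase runs in polynomial time (for fixed $t$ and $\alpha$), and Lemma~\ref{lema2} explicitly states its transformation is computable in polynomial time; hence the overall algorithm is polynomial. I do not expect any real obstacle here — the only point deserving a sentence of care is that $t$ and $\alpha$ must be treated as fixed constants (as in Theorem~\ref{thm:alphat}), since the enumeration of $|I|^t$ pricings hidden inside $P(i,j,r)$ is only polynomial for constant $t$; this is consistent with the statement of the corollary, which quantifies over $t$ and $\alpha$ rather than taking them as part of the input. The optimization of the ratio over the choice of $t$ and $\alpha$ (to extract the $2+\varepsilon$ bound advertised in the introduction) is a separate matter and is not needed for the corollary as stated.
\qed
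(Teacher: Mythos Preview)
Your proposal is correct and follows essentially the same approach as the paper: solve the Max-Buying-PL-Limited-$(\alpha,t)$ Problem optimally via Theorem~\ref{thm:alphat}, apply the transformation of Lemma~\ref{lema2}, and chain the inequalities from Lemmas~\ref{lema1} and~\ref{lema2} to obtain the stated ratio. Your additional remarks on running time and on treating $t,\alpha$ as fixed constants are accurate and consistent with the paper's treatment.
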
  
\begin{proof}
  The algorithm is very simple: find an optimal solution $(x,p)$ of the \mbox{Max-Buying-PL-Limited-$(\alpha,t)$} Problem as described in Theorem~\ref{thm:alphat} and return~$(\tilde{x}, p)$, where the item allocated to a bidder $b$ in $\tilde{x}$ is the most expensive item allocated to $b$ in $(x,p)$.

  Let $\sol$ denote the value of the solution found,  $\opt$ denote the value of an optimal solution of the \mbox{Max-Buying-PL-Limited} Problem and $\opt'$ denote the value of an optimal solution of the \mbox{Max-Buying-PL-Limited-$(\alpha,t)$} Problem.

  By Lemma~\ref{lema1} we have that $\opt'  \geq \opt/\alpha$ and by Lemma~\ref{lema2} we have that \mbox{$\sol \geq \frac{\alpha^t - 1}{\alpha^t - 1 + \alpha^{t-1}}\opt'$}, hence we conclude that $\sol \geq \frac{\alpha^t - 1}{\alpha(\alpha^t - 1 + \alpha^{t-1})}\opt$ and we obtain the desired approximation ratio. \qed 
\end{proof}

\begin{corollary}\label{cor:2e}
  For every $0 < \varepsilon < 1$, there is a $(2+\varepsilon)$-approximation of the Max-Buying-PL-Limited Problem.
\end{corollary}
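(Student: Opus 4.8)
The plan is to instantiate Corollary~\ref{cor:pllc} with a suitable choice of the parameters $\alpha$ and $t$ depending on $\varepsilon$. Recall that the corollary gives, for every positive integer $t$ and rational $\alpha>1$, an approximation of ratio
\[
R(\alpha,t) \;=\; \frac{\alpha(\alpha^t - 1 + \alpha^{t-1})}{\alpha^t - 1}.
\]
First I would simplify this expression: distributing the numerator and dividing gives $R(\alpha,t) = \alpha + \frac{\alpha^{t}}{\alpha^t - 1} = \alpha + \frac{1}{1 - \alpha^{-t}}$. In this form it is transparent that, for a fixed $\alpha>1$, the ratio decreases to $\alpha+1$ as $t\to\infty$, and that $\alpha+1\to 2$ as $\alpha\to 1^{+}$. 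Hence the ratio can be pushed arbitrarily close to $2$ by taking $\alpha$ slightly above $1$ and $t$ sufficiently large.

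Concretely, given $0<\varepsilon<1$, I would fix a rational $\alpha$ with $1<\alpha\le 1+\varepsilon/3$ (such a rational exists by density), so that $\alpha+1\le 2+\varepsilon/3$. Then I would choose $t$ large enough that $\frac{1}{1-\alpha^{-t}}\le 1+\varepsilon/3$; since this inequality is equivalent to $\alpha^{t}\ge (3+\varepsilon)/\varepsilon$, it holds for every integer $t\ge \log_{\alpha}\!\big((3+\varepsilon)/\varepsilon\big)$, so one may take $t=\lceil \log_{\alpha}((3+\varepsilon)/\varepsilon)\rceil$ (a positive integer, as $(3+\varepsilon)/\varepsilon>1$). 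With these choices,
\[
R(\alpha,t) \;=\; \alpha + \frac{1}{1-\alpha^{-t}} \;\le\; \Big(1+\frac{\varepsilon}{3}\Big) + \Big(1+\frac{\varepsilon}{3}\Big) \;=\; 2+\frac{2\varepsilon}{3} \;<\; 2+\varepsilon,
\]
so Corollary~\ref{cor:pllc} yields an algorithm with approximation ratio at most $2+\varepsilon$.

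Finally I would check polynomiality. Because $\varepsilon$ is a fixed constant, both $\alpha$ and $t$ are constants; in particular $t$ does not grow with the instance, so the algorithm behind Corollary~\ref{cor:pllc} — which runs the procedure of Theorem~\ref{thm:alphat}, whose running time is polynomial for fixed $\alpha$ and $t$ (it enumerates at most $|I|^{t}$ pricings) — runs in polynomial time. This completes the argument. There is no real obstacle here: the only points requiring care are the algebraic rewriting of $R(\alpha,t)$ and the observation that the chosen $t$ is a constant, which is precisely what guarantees that the resulting algorithm is polynomial.
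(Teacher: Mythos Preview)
Your proposal is correct and follows essentially the same approach as the paper: instantiate Corollary~\ref{cor:pllc} with $\alpha$ slightly larger than $1$ and $t$ large enough to push the ratio below $2+\varepsilon$. The only differences are cosmetic---the paper uses the equivalent rewriting $R(\alpha,t)=1+\alpha+\frac{1}{\alpha^t-1}$ and the concrete choices $\alpha=1+\varepsilon/2$, $t=\lceil\log_\alpha(2/\varepsilon+1)\rceil$, giving exactly $2+\varepsilon$ rather than your $2+2\varepsilon/3$.
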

\begin{proof}
  Let $\alpha = 1 + \frac{\varepsilon}{2}$ and $t = \lceil \log_{\alpha} (\frac{2}{\varepsilon} + 1)\rceil$ (notice that, because $\varepsilon < 1$, we have that $t$ is a positive integer). We have that
  \[
  \frac{\alpha(\alpha^t - 1 + \alpha^{t-1})}{\alpha^t - 1} =
  1 + \alpha + \frac{1}{\alpha^t - 1} \leq 
  1 + \left(1+\frac{\varepsilon}{2}\right) + \frac{\varepsilon}{2} =
  2 + \varepsilon.
  \]
  That is, we only have to carefully choose $\alpha$ and $t$ and use the algorithm from Corollary~\ref{cor:pllc}.\qed
\end{proof}

\section{Final Remarks}\label{sec:conclusion} 

In this paper we focused on the Max-Buying Problem when we have limited supply, considering the case with the price ladder restriction and without this restriction. 

Our results improve the previously best known approximation ratios for both problems (with and without the Price Ladder restriction). This technique of enumerating all possible allocations of items to bidders might also help in other pricing problems.

We believe that pricing problems with limited supply are very interesting because this is a realistic restriction and also a hard one to be considered from the approximation algorithm perspective. Even though in general the Max-Buying Problem seems to be simpler than the Min-Buying Problem, the Rank-Buying Problem, and the Envy-Free Pricing Problem, it is not trivial to develop good approximations for it when we have limited supply.

There are also some open problems. It is interesting to notice that our algorithm for the Max-Buying-Limited Problem has the same ratio as the algorithm presented by Aggarwal et al.~\cite{AggarwalFMZ04} for the Max-Buying Problem. It would be nice to develop an approximation with ratio better than $e/(e-1)$ for the Max-Buying Problem (and if possible also for the Max-Buying-Limited Problem) or to prove that this value is a lower bound on the approximation ratio of every algorithm for this problem.

In the case of the Max-Buying-PL-Limited Problem, it would be interesting to design a PTAS (since there is a PTAS for the unlimited case, developed by Aggarwal et al.~\cite{AggarwalFMZ04}) or to prove that the problem is $\mathrm{APX}$-hard. Also, notice that the Price Ladder is not helping to achieve a better approximation ratio as it happens for the unlimited supply version. This is somehow against our intuition that knowing the prices order would make it easier to find good pricings. We do not know if this is something intrinsic to this problem or if there are other ways to exploit the Price Ladder in order to obtain better approximations. 

\bibliographystyle{splncs03} 
\bibliography{auctionbib.bib} 
\newpage 
\appendix 

\section{Omitted Proofs}\label{sec:ommited}
In this appendix, we present the proofs that were omitted due to space constraints.

\setcounter{aux}{\value{theorem}}
\setcounter{theorem}{\value{pricing}}
\begin{lemma}
  The linear relaxation of formulation \SF{} can be solved in polynomial time.
\end{lemma}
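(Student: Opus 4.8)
The plan is to make the column-generation idea rigorous via the polynomial-time equivalence of separation and optimization (the ellipsoid method of Gr\"otschel, Lov\'asz and Schrijver). The linear relaxation of \SF{} has exponentially many variables but only $|I|+|B|$ nontrivial constraints, so its dual has $|I|+|B|$ variables and one constraint per star in $\mathcal{S}$. I would exhibit a polynomial-time separation oracle for the dual; this yields a polynomial-time algorithm for the dual, and then, by the standard argument, also for an optimal solution of the primal relaxation.

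First I would write down the dual. Associating a free variable $\mu_i$ with the equality constraint of item $i$ and a variable $\lambda_b \ge 0$ with the packing constraint of bidder $b$, the dual is to minimize $\sum_{i\in I}\mu_i + \sum_{b\in B}\lambda_b$ subject to $\mu_i + \sum_{b\in S}\lambda_b \ge |S|\,P_{(i,S)}$ for every star $(i,S)\in\mathcal{S}$, together with $\lambda\ge 0$. Given a candidate $(\mu,\lambda)$, the separation problem asks whether $\min_{(i,S)\in\mathcal{S}}\bigl(\mu_i + \sum_{b\in S}\lambda_b - |S|\,P_{(i,S)}\bigr) < 0$, and, if so, to return a violated star.

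The only difficulty is the nonlinearity $P_{(i,S)}=\min\{v_{ib}:b\in S\}$, which I would handle by guessing the threshold. Fix an item $i$ and a value $\pi\in\{v_{ib}:b\in B\}$, set $B_\pi=\{b\in B:v_{ib}\ge\pi\}$, and over all $S\subseteq B_\pi$ with $|S|\le C_i$ minimize $\mu_i+\sum_{b\in S}(\lambda_b-\pi)$; this last minimization is solved greedily, letting $S$ consist of the bidders of $B_\pi$ with $\lambda_b<\pi$, keeping only the $C_i$ smallest (in $\lambda_b$) if there are more than $C_i$ of them. Ranging over all $i\in I$ and all at most $|B|$ relevant values of $\pi$ takes polynomial time. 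Correctness rests on two observations: (i) every $S\subseteq B_\pi$ has $P_{(i,S)}\ge\pi$, so if the procedure reports $\mu_i+\sum_{b\in S}(\lambda_b-\pi)<0$, then $\mu_i+\sum_{b\in S}\lambda_b<|S|\pi\le|S|\,P_{(i,S)}$, i.e.\ a genuine violated constraint; and (ii) if some star $(i,S^{\ast})$ is violated, then for $\pi^{\ast}=P_{(i,S^{\ast})}$ we have $S^{\ast}\subseteq B_{\pi^{\ast}}$, so the run for $(i,\pi^{\ast})$ detects a violation. The degenerate constraints $\lambda_b\ge 0$ and $\mu_i\ge 0$ (the latter coming from the star $(i,\emptyset)$) are checked directly. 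The main thing to get right in the whole proof is exactly this oracle — the threshold-guessing step and the check that it never reports a false violation.

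Finally I would assemble the pieces in the routine way: since the dual has a polynomial-time separation oracle and is bounded and feasible with polynomially-encoded data, the ellipsoid method solves it in polynomial time; the polynomially many stars whose constraints are generated during the run form a set $\mathcal{S}'\subseteq\mathcal{S}$ over which the dual attains its optimum, so the primal relaxation restricted to the variables $\{x_{(i,S)}:(i,S)\in\mathcal{S}'\}$ is a polynomial-size linear program with the same optimal value; solving it and padding the remaining coordinates with zeros gives an optimal solution of the full relaxation of \SF{}.
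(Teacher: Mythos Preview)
Your proposal is correct and follows essentially the same route as the paper: write the dual, give a polynomial-time separation oracle by guessing the price level and picking bidders greedily, then invoke the ellipsoid method of Gr\"otschel, Lov\'asz and Schrijver to solve the dual and recover an optimal primal solution. The only cosmetic difference is that the paper parametrizes the oracle by the triple (item, price-determining bidder, cardinality) whereas you parametrize by (item, threshold price) and let the greedy step choose the cardinality; your organization is slightly leaner but the underlying idea is identical.
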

\setcounter{theorem}{\value{aux}}
\begin{proof}
  First, notice that the dual of the linear relaxation of \SF{} is 
\begin{center}
\begin{linearprogram}
&   \mbox{minimize} & \multicolumn{3}{l}{$\displaystyle \sum_{i \in I} \alpha_{i} + \sum_{b \in B} \beta_{b}$}\\[1.5ex]
&   \mbox{subject to}  & \alpha_i + \sum_{b \in S} \beta_{b} &\geq &P_{(i,S)}|S|  & \forall (i,S) \in \mathcal{S}\\
& & \alpha_i & \in & \mathbb{Q} & \forall i \in I\\
& & \beta_b & \geq & 0 & \forall b \in B.
\end{linearprogram}
\end{center}

We will prove that we can solve the separation problem for this dual linear program in polynomial time. From this, and using the result of Gr{\"o}tschel et al.~\cite{GrotschelLS88}, we will conclude that we can solve this linear program in polynomial time (and also obtain an optimal solution of the primal).

Consider that one is given a vector $(\alpha, \beta)$ that is a candidate to be a solution of the dual above. We want, in polynomial time, to decide if $(\alpha, \beta)$ is indeed a solution and, if $(\alpha, \beta)$ is not a solution, to provide a violated inequality.

First of all, we will assume that, for every bidder $b$, we have that $\beta_b \geq 0$, because if there is a bidder $b$ such that $\beta_b < 0$, then $(\alpha, \beta)$ is clearly not feasible and we can return the inequality $\beta_b \geq 0$ to prove it. Also, we will assume, w.l.o.g.\ that $C_i \leq |B|$ for every item $i$.

Now, for some $i \in I$, $b \in B$, and $k \in \{1,\ldots,C_i\}$, let $\mathcal{S}(i,b,k)$ denote the set of stars $(i,S)$ such that $P_{(i,S)} = v_{ib}$, $b \in S$, and $|S| = k$. Note that, if $\mathcal{S}(i,b,k) \neq \emptyset$, then there is a star $(i,S) \in \mathcal{S}(i,b,k)$ that minimizes $\sum_{b \in S} \beta_b$. Also, notice that there is a star $(i,S') \in \mathcal{S}(i,b,k)$ such that $\alpha_i + \sum_{b \in S'} \beta_b < P_{(i,S')}|S'|$ if and only if $\alpha_i + \sum_{b \in S} \beta_b < P_{(i,S)}|S|$ because $P_{(i,S)} = P_{(i,S')} = v_{ib}$ and $|S| = |S'| = k$.

We state that such $(i,S) \in \mathcal{S}(i,b,k)$ that minimizes $\sum_{b \in S} \beta_b$ can be found (if it exists) in polynomial time. One has only to consider the first $k-1$ bidders~$b' \neq b$ in non-increasing order of $\beta_{b'}$ such that $v_{ib'} \geq v_{ib}$  (it is clear that if there is less than $k-1$ such bidders, then $\mathcal{S}(i,b,k)$ is empty). Also, we can decide if~\mbox{$\alpha_i + \sum_{b \in S} \beta_b \geq P_{(i,S)}|S|$} in polynomial time .

To conclude our proof, notice that
\[\mathcal{S} = \bigcup_{\substack{i \in I, b \in B\\ k \in \{1,\ldots,C_i\}}} \mathcal{S}(i,b,k).\]

That is, one can iterate over all $i \in I$, $b \in B$, and $k \in \{1,\ldots,C_i\}$ and decide if there is a star in $\mathcal{S}(i,b,k)$ that violates an inequality, thus our proof follows.
\qed \end{proof}

\setcounter{aux}{\value{theorem}}
\setcounter{theorem}{\value{dlowerbound}}
\begin{lemma}
  Consider an instance of the Max-Buying-PL-Limited-$(\alpha,t)$ Problem. There is a non-negative integer $\ell$, whose value is bounded by a function that is linear on the length of the representation of the instance, such that there is an optimal solution where the price of every item is at least $d_{\ell}$.
\end{lemma}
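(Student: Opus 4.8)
The plan is as follows. Write $v_{\max} := \max\{v_{ib} : i\in I,\ b\in B\}$. If $v_{\max}=0$ every $d_k$ equals $0$ and $\ell=0$ works, so assume $v_{\max}\ge 1$; then every positive valuation is at least $1$. Group the admissible prices into \emph{blocks}, block $r\ge 0$ being $\{d_{rt},\dots,d_{(r+1)t-1}\}$, so that a bidder receives at most one item whose price lies in a given block. Let $\ell_0$ be the least multiple of $t$ with $d_{\ell_0}\le 1$, and set $\ell := \ell_0 + |I|\,t$. Since $\alpha$ and $t$ are fixed, $\ell_0 \le \log_\alpha v_{\max}+t$ and $|I|$ are both at most the length of the representation of the instance, hence so is $\ell$. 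It then suffices to produce an optimal solution in which every price is strictly above $d_\ell$.

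The construction I would use: fix an arbitrary optimal solution, keep its allocation (which bidder receives which item) and the item order $1,\dots,n$, and re-price the items in one left-to-right pass. When processing $i$, if item $i$ is unsold give it the price just assigned to item $i-1$ (or $d_0$ if $i=1$); otherwise give it the \emph{most expensive} price $d_{k_i}$ such that (i)~$d_{k_i}$ is at most the price assigned to item $i-1$, (ii)~$d_{k_i}\le v_{ib}$ for every bidder $b$ receiving $i$, and (iii)~$d_{k_i}$ lies in a block not occupied by any item $i'<i$ that shares a bidder with $i$. Conditions (i)--(iii) make this again a feasible solution of the Max-Buying-PL-Limited-$(\alpha,t)$ Problem with the same allocation (items received by a common bidder are forced by (iii) into pairwise distinct blocks).

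Two claims would finish the proof. \textbf{Claim A}: each item ends up with a price at least its original price, so the profit does not decrease and the re-priced solution is again optimal. I would prove, by induction on $i$, that the new price-index $k_i$ is at most the original price-index $m_i$ of item $i$: in the original solution, $m_i$ already satisfies (i) (by the original price ladder and the inductive bound on item $i-1$) and (ii) (the original solution was feasible), and it satisfies (iii) because any $i'<i$ sharing a bidder with $i$ occupied, originally, a block \emph{strictly} below that of item $i$ --- by the ladder $p_{i'}\ge p_i$ and by per-block uniqueness the two original blocks are distinct --- so by induction its new block is no larger and hence still strictly below item $i$'s original block; thus the greedy rule picks a price no lower than $d_{m_i}$. \textbf{Claim B}: every new price is strictly above $d_\ell$. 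Writing $r_i:=\lfloor k_i/t\rfloor$, condition (ii) never forces $r_i$ above $\ell_0/t$ (a positive valuation is $\ge 1\ge d_{\ell_0}$), the blocks $r_1\le r_2\le\cdots$ are non-decreasing, and the only block at or above $r_{i-1}$ that can be excluded by (iii) is $r_{i-1}$ itself (earlier items lie in blocks $\le r_{i-1}$); hence $r_i\le\max(r_{i-1}+1,\ \ell_0/t)$, and $r_1\le\ell_0/t$ gives $r_i\le \ell_0/t+i-1$, so $k_i\le (\ell_0/t+|I|-1)t+(t-1)=\ell_0+|I|t-1<\ell$, i.e. $d_{k_i}>d_\ell$.

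I expect \textbf{Claim A} to be the main obstacle: one must rule out that the pass ever assigns an item a price below its original one, since a priori pushing earlier items into lower-indexed blocks could block the original block of item $i$. What makes it go through is precisely the interaction of the price ladder with the per-block uniqueness constraint in the \emph{original} solution --- there, any two items received by a common bidder already sit in distinct blocks with the earlier (more expensive) one strictly lower, and this separation survives the (only downward-in-index) repricing of the earlier items. Everything else --- the trivial case, the linearity of $\ell$, and Claim B --- is routine bookkeeping.
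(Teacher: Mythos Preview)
Your argument is correct, and in fact it is more careful than the paper's own proof. The paper takes a much shorter route: it fixes $\ell$ to be the first index with $d_\ell<1$ and, starting from an arbitrary optimal solution, simply raises every price that lies below $d_\ell$ up to $d_\ell$, observing that the price ladder is preserved and that integrality of the valuations keeps each allocated item feasible. What this overlooks is the per-block constraint built into the Max-Buying-PL-Limited-$(\alpha,t)$ Problem: if a single bidder originally received several items whose prices all sat strictly below $d_\ell$ (necessarily in pairwise distinct blocks), collapsing all of those prices to the common value $d_\ell$ pushes them into one block and destroys feasibility. Concretely, with three items, one bidder, all valuations equal to $1$, $\alpha=2$, $t=1$, the unique optimum prices the items at $1,\tfrac12,\tfrac14$ and allocates all three, so no optimal solution has every price at least $d_1=\tfrac12$; the paper's choice of $\ell$ therefore cannot work here.

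Your left-to-right repricing pass handles exactly this difficulty through condition~(iii), and the price you pay is the larger bound $\ell=\ell_0+|I|\,t$ in place of the paper's $\ell\approx\log_\alpha v_{\max}$. Both bounds are linear in the input length, so the downstream use of the lemma (the dynamic program of Theorem~\ref{thm:alphat}) is unaffected. Two small points worth making explicit in a final write-up: first, assume without loss of generality that no item is allocated at price zero (so every sold item is bought only by bidders with valuation at least $1$, which you rely on for condition~(ii)); second, note that Claim~A already guarantees that your greedy step is always well defined, since the original index $m_i$ itself satisfies (i)--(iii). With these clarifications the proof is complete, and your inductive verification of (iii) in Claim~A---that the original per-block separation between conflicting items survives the only-downward-in-index repricing of earlier items---is exactly the step the paper's shortcut misses.
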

\setcounter{theorem}{\value{aux}}

\begin{proof}
  Let $k$ be the length of the binary representation of the instance and let $\ell$ be such that $d_\ell < 1 \leq d_{\ell - 1}$. We will prove that $\ell$ is linear on $k$ and that there is an optimal solution whose price of every item is at least $d_{\ell}$.

  By the definition of $d_{\ell-1}$, we have that $\alpha^{\ell - 1} \leq \max\{v_{ib} : i \in I, b \in B\}$. Also, notice that $\max\{v_{ib} : i \in I, b \in B\} \leq 2^k$, that is, $\alpha^{\ell - 1} \leq 2^k$, from where we conclude that $(\ell - 1)\log_{2}\alpha \leq k$, and deduce that \[\ell \leq \frac{k}{\log_{2}\alpha} + 1 = \Oh(k).\]

  Consider now an optimal solution $(x,p)$ of the Max-Buying-PL-Limited-$(\alpha,t)$ Problem and suppose that there is at least one item $i$ of price $p_i < d_{\ell}$. Also, suppose, w.l.o.g., that items with price zero are not allocated in $(x,p)$, since they do not contribute for the value of the solution. We can construct another solution~$(x,\tilde{p})$ by setting $\tilde{p}_i = d_{\ell}$ for every item $i$ of price $p_i < d_{\ell}$ and $\tilde{p}_i = p_i$ for every item $i$ with $p_i \geq d_{\ell}$. Notice that $(x,\tilde{p})$ is a feasible solution with the same profit as $(x,p)$, because it respects the price ladder and $x$ allocates feasible items to bidders. We conclude that there is an optimal solution where the price of every item is at least $d_{\ell}$. 
\qed \end{proof}

\setcounter{aux}{\value{theorem}}
\setcounter{theorem}{\value{alphaloss}}
\begin{lemma}
Let \opt{} be the value of an optimal solution for the Max-Buying-PL-Limited Problem and let $\opt'$ be the value of an optimal solution for the Max-Buying-PL-Limited-$(\alpha,t)$ Problem. We have that $\opt' \geq \opt/\alpha$.
\end{lemma}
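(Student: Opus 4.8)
The plan is to start from an optimal solution $(x,p)$ of the Max-Buying-PL-Limited Problem and round its pricing \emph{down} onto the grid $\{d_0,d_1,\dots\}$ while keeping the allocation $x$ untouched. Each price drops by a factor of at most $\alpha$, so the resulting pair still collects at least $\opt/\alpha$, and once I check that it is feasible for the Max-Buying-PL-Limited-$(\alpha,t)$ Problem I obtain $\opt'\ge\opt/\alpha$.

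Concretely, I would first perform a routine normalization of $(x,p)$: we may assume that every \emph{allocated} item has a positive price (a zero-priced allocation contributes nothing and can simply be removed), and note that every allocated item $i$ automatically satisfies $p_i\le d_0=\max\{v_{ib}\}$ since $i$ is feasible for some bidder. Then, for each item $i$ with $p_i>0$, let $k_i$ be the smallest non-negative integer with $d_{k_i}\le p_i$ (it exists because $d_k\to 0$), and set $\tilde p_i=d_{k_i}$; for the remaining, unallocated, items with $p_i=0$ set $\tilde p_i$ to any grid value no larger than the smallest $\tilde p_j$ already assigned, so the ladder is not disturbed. Next I would verify that $(x,\tilde p)$ is feasible for the Max-Buying-PL-Limited-$(\alpha,t)$ Problem. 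Every $\tilde p_i$ lies in $\{d_0,d_1,\dots\}$ by construction. Since $d_k$ is decreasing in $k$, the map $p_i\mapsto k_i$ is non-decreasing, so $p_1\ge\cdots\ge p_n$ forces $k_1\le\cdots\le k_n$ and hence $\tilde p_1\ge\cdots\ge\tilde p_n$: the price ladder holds. The allocation $x$ is unchanged, so each item $i$ is still used by at most $C_i$ bidders, and each bidder still receives at most one item in total — in particular at most one item whose price falls into a given block $\{d_{rt},\dots,d_{(r+1)t-1}\}$, which is exactly the constraint of the $(\alpha,t)$ variant. Finally, if bidder $b$ receives item $i$ then $v_{ib}\ge p_i\ge\tilde p_i$, so $i$ remains feasible for $b$.

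It then remains to bound the profit. For every allocated item $i$ we have either $k_i=0$, in which case $\tilde p_i=d_0\ge p_i$, or $k_i\ge1$, in which case minimality of $k_i$ gives $p_i<d_{k_i-1}=\alpha\,d_{k_i}=\alpha\,\tilde p_i$; either way $\tilde p_i\ge p_i/\alpha$. Summing over all allocated copies, the profit of $(x,\tilde p)$ is at least $1/\alpha$ times the profit of $(x,p)$, i.e.\ at least $\opt/\alpha$, and since $(x,\tilde p)$ is feasible for the $(\alpha,t)$ variant we conclude $\opt'\ge\opt/\alpha$. I do not expect a serious obstacle here; the only points that need a little care are the initial normalization (zero prices and the bound $p_i\le d_0$ for allocated items) and the observation that the weaker per-block constraint of the $(\alpha,t)$ problem is automatically implied by the original one-item-per-bidder constraint.
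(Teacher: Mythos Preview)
Your proposal is correct and follows essentially the same approach as the paper: round every price in an optimal Max-Buying-PL-Limited solution down to the nearest $d_k$, keep the allocation, and observe that feasibility for the $(\alpha,t)$ variant is inherited while each price drops by at most a factor~$\alpha$. The paper's proof is a terse two-line version of exactly this argument; your additional verifications (preservation of the ladder under monotone rounding, the per-block constraint being implied by the single-item constraint, and the normalization for zero prices) are all correct and simply fill in details the paper leaves implicit.
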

\setcounter{theorem}{\value{aux}}

\begin{proof}
  Consider an optimal solution of the Max-Buying-PL-Limited Problem. By rounding down the price of every item to the nearest $d_k$, we obtain a feasible solution for the Max-Buying-PL-Limited-$(\alpha, t)$ Problem. Notice that~\mbox{$d_{k-1} = \alpha d_k$}, that is, we lose at most a factor of $\alpha$ by rounding down. An optimal solution for the Max-Buying-PL-Limited-$(\alpha,t)$ Problem has value not smaller than the value of this solution, thus the result follows.
\qed \end{proof}

\section{Derandomizing}\label{sec:derandomizing}
In this appendix we show how to design a deterministic $e/(e-1)$-approximation for the Max-Buying-Limited Problem using a derandomization of the algorithm \proc{\Rounding} presented in Sect.~\ref{sec:lc}. We start with some useful definitions.

\begin{definition}
  We denote by $\profit$ the random variable that represents the value of the solution found by \proc{\Rounding}. Also, for a bidder $b$, we denote by $\profit(b)$ the random variable that represents the profit obtained by \mbox{\proc{\Rounding}} from bidder $b$.
\end{definition}

\begin{definition}
  Let $S_1,\ldots,S_N$ be a non-increasing ordering in $P_s$ of the stars with $x_S > 0$. We denote by $E_j$ the event in which \proc{\Rounding} selects star~$S_j$ and by $\overline{E}_j$ the event in which $S_j$ is not selected by \proc{\Rounding}.
\end{definition}

Our algorithm uses conditional expectations~\cite{ErdosS73,Spencer87} to decide if it should keep a specific star in the solution. For this we will have to make a decision for a star~$k$, that will be denoted by $d_k$, as shown below.

\begin{definition}
  Let $S_1,\ldots,S_N$ be a non-increasing ordering in $P_s$ of the stars with $x_S > 0$, let $K \subseteq \{1,\ldots,N\}$ and, for every $k \in K$, let $d_k \in \{E_k, \overline{E}_k\}$. We say that $\{d_k\}_{k \in K}$ is \emph{feasible} if, for every item $i \in I$, there is at most one star $\ell$ of $i$ such that $\ell \in K$ and $d_\ell = E_\ell$.
\end{definition}

That is, we say that $\{d_k\}_{k \in K}$ is feasible if there is at most one chosen star for every item.

\begin{lemma}\label{lemma:expectation}
    Let $S_1,\ldots,S_N$ be a non-increasing ordering in $P_s$ of the stars with $x_S > 0$. Let $b$ be a bidder,
    let $K \subseteq N$ be such that if $\ell \in K$ contains $b$ then
    for every star $\ell' < \ell$ that contains $b$ we have that $\ell' \in K$. It is possible to calculate
    $\mathbb{E}[\profit(b)|\{d_k\}_{k \in K}]$ in polynomial time.
\end{lemma}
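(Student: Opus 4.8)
I would compute $\mathbb{E}[\profit(b)\mid\{d_k\}_{k\in K}]$ by conditioning on what happens to the stars containing $b$ that lie \emph{outside} $K$, exploiting the fact that the star selection is independent across items. First I would discard from consideration every star $S_j$ with $b\notin S_j$, since only stars containing $b$ can affect $\profit(b)$; let $j_1<j_2<\cdots<j_m$ enumerate (in the fixed non-increasing ordering) the stars that contain $b$ and have positive LP value. By the hypothesis on $K$, the indices among these that belong to $K$ form a prefix $j_1,\dots,j_r$ (for some $r$), so the decisions $d_k$ already fix, for each of these, whether it was chosen or not. Then there are two cases. If some $d_{j_s}=E_{j_s}$ with $s\le r$, then $b$ is assigned the item $c(j_s)$ for the smallest such $s$ (the most expensive chosen star containing $b$), and so $\mathbb{E}[\profit(b)\mid\{d_k\}_{k\in K}] = P_{j_s}$ deterministically — nothing left to compute.

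Otherwise all the decided stars containing $b$ are $\overline{E}$, i.e.\ we are conditioning on $\overline{E}_{j_1},\dots,\overline{E}_{j_r}$, and the profit from $b$ is determined by the highest-priced star among $S_{j_{r+1}},\dots,S_{j_m}$ that actually gets selected. Here I would reuse exactly the telescoping computation from the proof of Theorem~\ref{thm:rounding}: conditioning on $\overline{E}_{j_1},\dots,\overline{E}_{j_{s-1}}$, the star $S_{j_s}$ is selected with probability $x_{j_s}/(1-y_{j_s})$ independently of the other items' choices, where $y_{j_s}$ is the total LP mass already placed (in the decided part or among higher-priced undecided stars) on the item $c(j_s)$. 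Since we are conditioning only on a prefix of events, these conditional probabilities are still products of per-item expressions that involve only the quantities $x_{j_s}$, $P_{j_s}$ and the partial sums $y_{j_s}$, each of which is a sum over at most $N$ explicitly-listed stars. Hence
\[
\mathbb{E}[\profit(b)\mid\{d_k\}_{k\in K}] \;=\; \sum_{s=r+1}^{m} P_{j_s}\,\frac{x_{j_s}}{1-y_{j_s}}\prod_{s'=r+1}^{s-1}\Bigl(1-\tfrac{x_{j_{s'}}}{1-y_{j_{s'}}}\Bigr),
\]
with the caveat that one must be slightly careful about whether a decided star of item $c(j_s)$ already has $d=\overline{E}$ (which removes that mass from the denominator) — but all of this is bookkeeping over the finite list $S_1,\dots,S_N$, so the whole expression is evaluable in polynomial time. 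Then $\mathbb{E}[\profit(b)\mid\{d_k\}_{k\in K}]$ is this finite sum, computed in $\Oh(N)$ arithmetic operations after an $\Oh(N)$ preprocessing pass to gather the relevant stars and partial sums.

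**Main obstacle.** The only real subtlety is getting the conditional selection probabilities right when $K$ contains decided stars of the \emph{same item} as an undecided star $S_{j_s}$: conditioning on $d_\ell=\overline{E}_\ell$ for a star $\ell$ of item $c(j_s)$ changes the normalization from $1-y_{j_s}$ to something smaller, and feasibility of $\{d_k\}_{k\in K}$ (at most one chosen star per item) guarantees we never condition on two contradictory events for one item. I would state carefully that, because selections across distinct items are independent, conditioning on any collection of events each concerning a single item just renormalizes each item's distribution separately, and then argue the prefix hypothesis on $K$ ensures all the events we condition on are "downward closed" enough that the ordering-based telescoping from Theorem~\ref{thm:rounding} still applies verbatim to the undecided suffix. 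Everything else is a routine finite computation.
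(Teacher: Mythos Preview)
Your approach is essentially the same as the paper's: restrict attention to the stars containing $b$, observe that the hypothesis on $K$ forces those that lie in $K$ to form a prefix, handle the trivial case where some prefix star is already selected, and otherwise unroll a telescoping recursion over the undecided suffix using the per-item renormalized probabilities. The paper writes this as a recurrence terminating at $\mathbb{E}[\profit(b)\mid\{d_k\}_{k\in K},\overline{E}_{f(s)},\ldots,\overline{E}_{f(t)}]=0$, while you write the equivalent closed-form sum, but the computation is identical.

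One small point the paper states explicitly and you only gesture at: when you consider an undecided star $S_{j_s}$ containing $b$, its item $c(j_s)$ may already have a \emph{selected} star in $K$ (i.e., some $\ell\in K$ with $c(S_\ell)=c(j_s)$ and $d_\ell=E_\ell$, a star that does not contain $b$). In that case $\mathbb{P}(E_{j_s}\mid\{d_k\}_{k\in K})=0$, not $x_{j_s}/(1-y_{j_s})$. Your caveat only mentions the $d=\overline{E}$ case (which shrinks the denominator); you should also explicitly set the term to zero in the $d=E$ case. This is indeed just bookkeeping, but since the lemma is used inside the derandomization where such $E$-decisions for stars not containing $b$ do occur, it is worth spelling out.
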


\begin{proof}
  First, notice that $N$ is polynomial in $|I| + |B|$.
  If there is $k \in K$ such that $d_k = E_k$ and $S_k$ contains bidder $b$, then because we are considering the stars in a non-increasing order and \proc{\Rounding} is such that a bidder gets an item with maximum price among the ones that selected it, we have that 
  $\mathbb{E}[\profit(b)|\{d_k\}_{k \in K}] = P_{S_{k^*}}$ 
  where $S_{k^*}$ is the star with smallest index that contains $b$ and $d_{k^*} = E_{k^*}$.

  If this is not the case, then let $T_1, T_2, \ldots, T_t$ be the subsequence of $S_1, \ldots, S_N$ of stars that contain $b$ and let $f$ be the mapping of indexes of $T_1,\ldots,T_t$ to $S_1,\ldots,S_N$ (that is, $T_r = S_{f(r)}$). If $f(t) \in K$, then every star that contains~$b$ is in $\{S_k\}_{k \in K}$ and none of them is selected, from which we conclude that $\mathbb{E}[\profit(b)|\{d_k\}_{k \in K}] = 0$. If $f(t) \notin K$, let $T_s$ be the first star (in the ordering) that is not contained in $\{S_k\}_{k \in K}$ (notice that, by hypothesis, $f(r) \in K$ and \mbox{$d_{f(r)} = \overline{E}_{f(r)}$} for every $T_r$ with \mbox{$1 \leq r < s$}). We have that
  \begin{align*}
    \mathbb{E}[\profit(b)|\{d_k\}_{k \in K}] 
    &= \mathbb{E}[\profit(b)|\{d_k\}_{k \in K},E_{f(k)}]\mathbb{P}(E_{f(s)}|\{d_k\}_{k \in K})\\
    &+ \mathbb{E}[\profit(b)|\{d_k\}_{k \in K},\overline{E}_{f(s)}]\mathbb{P}(\overline{E}_{f(s)}|\{d_k\}_{k \in K})\\
    & = P_{f(s)}\mathbb{P}(E_{f(s)}|\{d_k\}_{k \in K})\\ 
    & + \mathbb{E}[\profit(b)|\{d_k\}_{k \in K},\overline{E}_{f(s)}]\mathbb{P}(\overline{E}_{f(s)}|\{d_k\}_{k \in K}).
  \end{align*}

  Notice that $\mathbb{P}(E_{f(s)}|\{d_k\}_{k \in K}) = 0$ if there is a star $S_u$ with $u \in K$ such that $d_s = E_s$
  and $S_u$ and $T_s$ are stars of the same item (that is, this item already has a selected star, so $T_k$ does not have a chance of being selected). Otherwise, $\mathbb{P}(E_{f(s)}|\{d_k\}_{k \in K}) = x_{f(s)}/(1 - y)$ where \mbox{$y = \sum_{s \in K}\{x_s| c(S_s) = c(T_k)\}$}. 

  Finally, using the fact that \mbox{$\mathbb{E}[\profit(b)|\{d_k\}_{k \in K},\overline{E}_{f(s)}, \overline{E}_{f(s+1)},\ldots, \overline{E}_{f(t)}] = 0$}, we can calculate $\mathbb{E}[\profit(b)|\{d_k\}_{k \in K},\overline{E}_{f(s)}]$ in $t-s+1$ steps.\qed
\end{proof}

\begin{corollary}\label{cor:expectation}
  Let $S_1,\ldots,S_N$ be a non-increasing ordering in $P_s$ of the stars with $x_S > 0$. For any $0 \leq j \leq N$ and every feasible $\{d_1, d_2, \ldots, d_j\}$, it is possible to calculate $\mathbb{E}[\profit|d_1,\ldots,d_j]$ in polynomial time.
\end{corollary}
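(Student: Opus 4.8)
The plan is to reduce Corollary~\ref{cor:expectation} to Lemma~\ref{lemma:expectation} by decomposing the total profit into per-bidder contributions and showing that each per-bidder conditional expectation can be computed under the given partial assignment. First I would write $\profit = \sum_{b \in B} \profit(b)$, so that by linearity of expectation $\mathbb{E}[\profit \mid d_1,\ldots,d_j] = \sum_{b \in B} \mathbb{E}[\profit(b) \mid d_1,\ldots,d_j]$; since $|B|$ is polynomial, it suffices to compute each term in polynomial time. The subtlety is that Lemma~\ref{lemma:expectation} requires the index set $K$ to be \emph{downward closed with respect to the stars containing $b$}, i.e.\ if $\ell \in K$ contains $b$ then every earlier star $\ell' < \ell$ containing $b$ also lies in $K$. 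The set $K = \{1,\ldots,j\}$ is a prefix of the full ordering $S_1,\ldots,S_N$, hence is trivially downward closed in this sense: any $\ell' < \ell \le j$ satisfies $\ell' \in \{1,\ldots,j\}$ regardless of which bidder it contains. So the hypothesis of Lemma~\ref{lemma:expectation} is met for every bidder $b$ with $K = \{1,\ldots,j\}$ and the decisions $d_1,\ldots,d_j$.

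The remaining point to check is feasibility: Lemma~\ref{lemma:expectation} implicitly assumes the conditioning event $\{d_k\}_{k \in K}$ has positive probability, which holds precisely when $\{d_1,\ldots,d_j\}$ is feasible in the sense of the preceding definition (at most one chosen star per item), and this is exactly the hypothesis of the corollary. Thus for each bidder $b$ we may invoke Lemma~\ref{lemma:expectation} with $K = \{1,\ldots,j\}$ to obtain $\mathbb{E}[\profit(b) \mid d_1,\ldots,d_j]$ in polynomial time. Summing over the polynomially many bidders gives $\mathbb{E}[\profit \mid d_1,\ldots,d_j]$ in polynomial time, as claimed. I would also note explicitly that the case $j = 0$ corresponds to the unconditional expectation $\mathbb{E}[\profit]$, which is covered since the empty set of decisions is vacuously feasible and vacuously downward closed.

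There is essentially no hard calculation here; the only thing one must be careful about is the bookkeeping of hypotheses, namely verifying that a \emph{prefix} $\{1,\ldots,j\}$ of the star ordering automatically satisfies the per-bidder downward-closure condition in Lemma~\ref{lemma:expectation}, even though that lemma is stated for a more general $K$. The main (mild) obstacle is therefore purely expository: making clear why the global prefix condition specializes correctly to each bidder's subsequence of stars. Once that is observed, the proof is a one-line application of linearity of expectation together with the polynomial bound on $|B|$ and on $N$.
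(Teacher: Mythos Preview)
Your proposal is correct and follows essentially the same approach as the paper's own proof: decompose $\profit$ by linearity of expectation into $\sum_{b\in B}\mathbb{E}[\profit(b)\mid d_1,\ldots,d_j]$, observe that the prefix $K=\{1,\ldots,j\}$ automatically satisfies the downward-closure hypothesis of Lemma~\ref{lemma:expectation} for every bidder, and invoke that lemma term by term. Your write-up is more explicit about feasibility and the $j=0$ case, but the argument is the same.
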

\begin{proof}
  Notice that $\mathbb{E}[\profit|d_1,\ldots,d_j] = \sum_{b \in B}\mathbb{E}[\profit(b)|d_1,\ldots,d_j]$ and also notice that $\{d_1, \ldots, d_j\}$ satisfies the property that if a star $\ell \in \{1,\ldots,j\}$ contains~$b$ then for every star $\ell' < \ell$ that contains $b$ we have that $\ell' \in \{1,\ldots,j\}$. We can use Lemma~\ref{lemma:expectation} to calculate the expected value for every bidder and compute the summation.
\end{proof}

We are now ready to present the derandomized algorithm, called \mbox{\proc{\Derandomized}}.

\begin{codebox}
\Procname{$\proc{\Derandomized}(I,B,v)$}
\li Let $x$ be an optimal solution of the linear relaxation for the input $(I,B,v)$
\li Let $S_1,\ldots,S_N$ be a non-increasing ordering in $P_s$ of the stars with $x_S > 0$
\li \For $j \gets 1$ \To $N$
\li \Do Let $i$ be the item of star $S_j$
  \li \If $S^*_i$ is defined or $\mathbb{E}[\profit|d_1,\ldots,d_{j-1},E_j] < \mathbb{E}[\profit|d_1,\ldots,d_{j-1},\overline{E}_j]$
    \li \Then $d_j \gets \overline{E}_j$
    \li \Else $S^*_{i} \gets S_j$
    \li       $d_j \gets E_j$
    \End
\End
\li \For every bidder $b \in B$
\li   \Do Let $i$ be the item such that $S^*_i = (i,S)$ with $b \in S$ and maximum $P_{S_i}$
\li       Sell item $i$ to bidder $b$
\li       \If there is no such item, bidder $b$ does not receive an item
      \End
\end{codebox}

\begin{theorem}
  \proc{\Derandomized} is an $e/(e-1)$-approximation for the Max-Buying-Limited Problem.
\end{theorem}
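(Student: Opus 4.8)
The plan is to run the standard method of conditional expectations on \proc{\Rounding}, controlling the random variable $\profit$ itself. The base case is already in hand: the proof of Theorem~\ref{thm:rounding} shows $\mathbb{E}[\profit] \geq \frac{e-1}{e}\sum_{b \in B}\sum_{(i,S) \in \mathcal{S}: b \in S}P_{(i,S)}x_{(i,S)} \geq \frac{e-1}{e}\OPT$, so it is enough to select the stars deterministically without ever decreasing the current conditional expectation, and to do so in polynomial time. \proc{\Derandomized} processes the positive-variable stars $S_1,\ldots,S_N$ in non-increasing price order and, at step $j$, fixes $d_j \in \{E_j,\overline{E}_j\}$: it rejects $S_j$ outright whenever an earlier star of the same item is already accepted (so that $\{d_1,\ldots,d_j\}$ stays feasible in the sense of the definition before Lemma~\ref{lemma:expectation}), and otherwise keeps whichever of $E_j,\overline{E}_j$ maximizes $\mathbb{E}[\profit \mid d_1,\ldots,d_j]$.

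The heart of the argument is the invariant $\mathbb{E}[\profit \mid d_1,\ldots,d_j] \geq \mathbb{E}[\profit \mid d_1,\ldots,d_{j-1}]$. If $S_j$ is a star of an item for which some $S^*_i$ is already chosen, then the conditioning $\{d_1,\ldots,d_{j-1}\}$ already forces $\overline{E}_j$ (that star can no longer be picked by \proc{\Rounding}), so $\mathbb{E}[\profit \mid d_1,\ldots,d_{j-1},\overline{E}_j] = \mathbb{E}[\profit \mid d_1,\ldots,d_{j-1}]$ and the invariant holds with equality — this is exactly the first branch of the algorithm. Otherwise $\{d_1,\ldots,d_{j-1},E_j\}$ is feasible, and since the star chosen for item $c(j)$ is independent of the choices for the other items, the law of total expectation writes $\mathbb{E}[\profit \mid d_1,\ldots,d_{j-1}]$ as a convex combination of $\mathbb{E}[\profit \mid d_1,\ldots,d_{j-1},E_j]$ and $\mathbb{E}[\profit \mid d_1,\ldots,d_{j-1},\overline{E}_j]$, so keeping the larger of the two cannot decrease it. Chaining from $j=1$ to $j=N$ gives $\mathbb{E}[\profit \mid d_1,\ldots,d_N] \geq \mathbb{E}[\profit] \geq \frac{e-1}{e}\OPT$. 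Once all $N$ decisions are fixed, the accepted stars $\{S^*_i\}$ are completely determined, $\profit$ is a constant equal to the value of the allocation built by the final loop of \proc{\Derandomized} (which copies the allocation loop of \proc{\Rounding} with $S^*_i$ in place of $S_i$), hence the returned solution has value $\mathbb{E}[\profit \mid d_1,\ldots,d_N] \geq \frac{e-1}{e}\OPT$. Feasibility is clear: each item retains at most one star by construction, and each bidder is sold the item its allocation loop selects, which is feasible for that bidder because its price is $P_{S^*_i} \leq v_{ib}$.

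For polynomiality: $N$ is polynomial in $|I|+|B|$, the relaxation is solved in polynomial time by Lemma~\ref{lemma:polynomial}, and at each of the $N$ iterations the two quantities $\mathbb{E}[\profit \mid d_1,\ldots,d_{j-1},E_j]$ and $\mathbb{E}[\profit \mid d_1,\ldots,d_{j-1},\overline{E}_j]$ are computable in polynomial time by Corollary~\ref{cor:expectation}, since processing stars in order makes every prefix $\{d_1,\ldots,d_{j-1},\cdot\}$ automatically satisfy the prefix-closedness hypothesis of Lemma~\ref{lemma:expectation} and stay feasible. I expect the main care to be in the second paragraph: making sure that $E_j$ and $\overline{E}_j$ split the conditional mass correctly — in particular handling the degenerate case where $\overline{E}_j$ has conditional probability zero because $S_j$ is the last positive-variable star of its item (then $E_j$ is forced and one must check that the algorithm does keep it) — and confirming that the estimator computed by the procedure of Lemma~\ref{lemma:expectation} coincides with the true conditional expectation on exactly the prefixes that arise, so that the convexity step is legitimate.
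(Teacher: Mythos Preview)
Your proposal is correct and follows essentially the same approach as the paper's proof: the standard method of conditional expectations applied to \proc{\Rounding}, with Theorem~\ref{thm:rounding} supplying the base inequality $\mathbb{E}[\profit]\geq\frac{e-1}{e}\,\OPT$ and Corollary~\ref{cor:expectation} guaranteeing that each conditional expectation is computable in polynomial time. Your write-up is in fact more careful than the paper's brief argument, which invokes the convex-combination step uniformly without explicitly treating the forced-rejection branch (when $S^*_i$ is already set) or the degenerate last-star case you flag.
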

\begin{proof}
  We only have to prove that this derandomization provides the same approximation ratio as before and that \proc{\Derandomized} is a polynomial-time algorithm. Notice that, for every $1 \leq j \leq N$, we have that
  \begin{align*}
    \mathbb{E}[\profit|d_1,d_2,\ldots,d_{j-1}] 
    & = \mathbb{E}[\profit|d_1,d_2,\ldots,d_{j-1},E_i]\mathbb{P}(E_i| d_1,d_2,\ldots,d_{j-1})\\
    & + \mathbb{E}[\profit|d_1,d_2,\ldots,d_{j-1},\overline{E}_i]\mathbb{P}(\overline{E}_i|d_1,d_2,\ldots,d_{j-1})\\
    & \leq \mathbb{E}[\profit|d_1,d_2,\ldots,d_{j-1},d_j]
  \end{align*}
  where $d_j$ is the decision that we made on step $j$. From this, using induction and by Theorem~\ref{thm:rounding}, we conclude that  $\mathbb{E}[\profit|d_1,\ldots,d_N] \geq \mathbb{E}[\profit] \geq e/(e-1)$. 

  For the time consumption analysis, recall that $N$ is polynomial in \mbox{$|I|+|B|$} and that, by Corollary~\ref{cor:expectation}, we can compute $\mathbb{E}[\profit|d_1,\ldots,d_{j-1},E_j]$ and $\mathbb{E}[\profit|d_1,\ldots,d_{j-1},\overline{E}_j]$ in polynomial time.\qed
\end{proof}
\end{document}